\documentclass[11pt,a4paper]{article}
\usepackage[T1]{fontenc}
\usepackage{geometry}
\usepackage{amsmath}
\usepackage{amssymb}
\usepackage{amsthm}
\usepackage{mdframed}
\usepackage{graphicx}
\usepackage{bm}
\usepackage[inline]{enumitem}
\usepackage{tikz}
\usepackage{float}
\usetikzlibrary{matrix}
\usepackage{dsfont}
\usepackage[numbers]{natbib}
\usepackage[english]{babel}
\usepackage{caption}
\numberwithin{equation}{section}
\captionsetup[figure]{labelfont={bf},labelformat={default},labelsep=period,name={Fig.},
width=.9\linewidth}

\newcommand*{\lz}{\lambda}
\newcommand*{\id}{\mathop{}\!\mathrm{d}}
\newcommand*{\js}{\psi}
\newcommand*{\mjs}{\widehat{\js}}
\newcommand*{\bjs}{\widetilde{\js}}

\newcommand*{\jss}{\varphi}

\newcommand*{\st}{\sigma_3}

\newcommand{\determ}[2]{#1_{#2}^{\dagger} #1_{#2}}

\newcommand{\determz}[3]{#1_{#2}^{\dagger} {#3} #1_{#2} {#3}}

\newcommand*{\C}{\mathbb{C}}
\newcommand*{\LI}{L^1_{\scriptscriptstyle loc}(\R_+)}
\newcommand*{\R}{\mathbb{R}}
\newcommand*{\I}{\mathds{1}}

\DeclareMathOperator*{\Res}{Res}

\DeclareMathOperator{\sech}{sech}
\DeclareMathOperator{\diag}{diag}
\theoremstyle{plain}
\newtheorem{lem}{Lemma}[section]
\newtheorem{prop}[lem]{Proposition}
\newtheorem{cor}[lem]{Corollary}
\theoremstyle{definition}

\newtheorem{remark}[lem]{Remark}
\newmdtheoremenv{rhp}{Riemann--Hilbert problem}
\newcommand\Item[1][]{%
  \ifx\relax#1\relax  \item \else \item[#1] \fi
  \abovedisplayskip=0pt\abovedisplayshortskip=0pt~\vspace*{-\baselineskip}}
\begin{document}
\title{Soliton solutions of the nonlinear Schr\"odinger equation with defect conditions}
\author{K.T. Gruner%
\thanks{The author is supported by the SFB/TRR 191 `Symplectic Structures in Geometry,
Algebra and Dynamics', funded by the DFG.}\\\texttt{kgruner@math.uni-koeln.de}
\\\\ \emph{Universit{\"a}t zu K{\"o}ln, Mathematisches Institut,}
\\ \emph{Weyertal 86-90, 50931 K{\"o}ln, Germany}
}
\maketitle
  \begin{abstract}
  A recent development in the derivation of soliton solutions for initial-boundary
  value problems through Darboux transformations, motivated to reconsider solutions
  to the nonlinear Schr\"odinger (NLS) equation on two half-lines connected via
  integrable defect conditions. Thereby, the Darboux transformation to construct
  soliton solutions is applied, while preserving the spectral boundary constraint
  with a time-dependent defect matrix. In this particular model, \(N\)-soliton solutions
  vanishing at infinity are constructed. Further, it is proven that solitons are
  transmitted through the defect independently of one another.
  \end{abstract}
  {\bf Keywords:} NLS equation, integrable boundary conditions, star-graph,
  initial-boundary value problems, soliton solutions, dressing transformation, inverse scattering method.
\section{Introduction}
As an important physical equation the NLS equation was subject to a great number
of research works. Over time various methods to deal with integrable nonlinear PDEs in
different settings have been formulated. One of these methods, the Unified Transform,
announced in \cite{Fo} was successfully applied to initial-boundary value problems of
linear and integrable nonlinear PDEs of one space and one time variable. To this end,
the Unified Transform was used to yield results for the NLS equation regarding various
spatial domains like the half-line, a finite interval and even a star-graph \cite{Ca}.
As in the case for initial value problems, it is based on the representation of the
equation through a Lax pair which consists of two matrices usually referred to as the
\(t\) part and the \(x\) part. However, the structural innovation of the Unified Transform
is the simultaneous use of \(t\) and \(x\) part in the direct scattering process.

In some cases which mainly depend on the boundary condition, the Unified Transform is
for initial-boundary value problems as efficient as the inverse scattering transform
\cite{A} for initial value problems. These so-called linearizable
boundary conditions make use of a natural symmetry relation to linearize the problem on
the spectral side. Having identified linearizable boundary conditions, it is a priori not
clear that they are also integrable boundary conditions. Though, most of the known examples
conveniently fit both classes. Finding formulae for long-time asymptotics \cite{FoIt} and
for explicit solutions \cite{BiHw, Zh} for the NLS equation on the half-line
with certain and linearizable boundary conditions, respectively, has been well addressed
in the literature.

Nevertheless, the study of a defect or impurity at a fixed point which preserves integrability
is still of interest in rather recent studies by several authors,
not only for the NLS equation, but also for other PDEs. In one of these studies \cite{CoZa},
the authors illuminate the Lagrangian description of ``jump-defects'',
integrability preserving discontinuities with two fields \(u\), \(v\), where the conditions
relating the fields on the sides of the defect are B\"acklund transformations
frozen at the defect location. For the NLS equation on the two half-lines
they established the following defect at \(x=0\):
\begin{align*}
(u - v)_x &= i \alpha(u-v) +\Omega(u+v), \\
(u - v)_t &= -\alpha(u-v)_x +i\Omega(u+v)_x + i (u-v)(|u|^2+|v|^2),
\end{align*}
where \(\Omega = \sqrt{\beta^2 - |v-u|^2}\), \(\alpha\) and \(\beta\) real parameters
(\(\alpha\) was added in \cite{Ca2}).
Moreover, this jump-defect was also used by one of the authors to obtain new boundary condition
for the NLS equation on the half-line by combining them with Dirichlet boundary condition, see \cite{Za}.
It was shown that the defect condition \cite{Ca2} and also the new boundary condition
\cite{Za} have infinitely many conserved quantities and hence, they are integrable.
Moreover, the authors of \cite{CoZa} conjectured that in the model of the NLS equation with
defect conditions, an arbitrary number of solitons are transmitted through the defect
independently of one another. However, they have only proven this for particular cases
of one- and two-soliton solutions.

Using the aforementioned natural symmetry, a method called mirror-image technique
was developed to tackle initial-boundary value problems on the half-line by
extending it to the whole axis, which may seem like an unnatural approach.
On the other hand, there was recently a development for the Unified Transform \cite{Zh}
incorporating the Darboux transformation and hence the construction of exact solutions.
The method is, as it uses the Darboux transformation, highly reliant on the integrability
of the model. However given that it is, the idea of the method consists of the construction
of solutions while preserving the integrability. For the NLS equation with Robin boundary
conditions both methods were successfully applied, see \cite{BiHw} for the mirror-image
technique.

For integrable PDEs, the Darboux transformation \cite{GHZ, MaSa} is a
powerful method for constructing solutions. In particular, the well-known soliton
solution appearing in many physical motivated PDEs like the NLS equation can be
computed thereby. The crucial part of the new approach is to supplement the Darboux
transformation with the boundary conditions without destroying the integrability of
the system, which was realized in \cite{Zh} and called ``dressing the boundary''.

In this paper, our objective is to take up the described model of the NLS equation on two
half-lines together with defect conditions and compute exact solutions through the
dressing the boundary method, which already yielded results for a similar integrable model.
Then, only considering pure soliton solutions, we want
to prove the conjecture formulated in \cite{CoZa}, i.e.\@ each soliton
in the pure soliton solution is transmitted through the defect independently.
To the best knowledge of the author, combining boundary conditions corresponding
to a time-dependent boundary matrix with the latest method of computing exact solutions
of initial-boundary value problems \cite{Zh} extended to a star-graph is a novel
approach.

In Section \ref{sc:NLS}, we introduce the NLS equation and its equivalent spectral part
for which the inverse scattering transform is discussed. In particular, the analysis
for the Jost solutions and an understanding of the influence of parameter in the
construction of soliton solutions is crucial.
We present the methods of the B\"acklund transformations and Darboux transformations
in Section \ref{sc:back} and \ref{sc:darb}, respectively and discuss briefly the idea
of their connection. In preparation for dressing the boundary in the case of defect
conditions, we present their analogous spectral expression in Section~\ref{sc:def}
and prove some helpful properties.
Then, in Section~\ref{sc:wl} of this paper, we specify the model we want to solve:
the NLS equation on two half-lines connected via defect conditions at \(x=0\) and
realize the dressing the boundary in Proposition \ref{p:Nsol}.
Thereby, dressing the boundary lets us compute and visualize \(N\)-soliton solutions
in Section \ref{sc:vis}. Moreover, we discuss these solutions and prove that the solitons
are transmitted through the defect independently.
Finally, we gather further information and directions in the Conclusion. 
\section{Initial value problem for the NLS}\label{sc:NLS}
In the following, we give a brief summary of the inverse scattering transform of the
focusing NLS equation. As in \cite{BiHw} and \cite{Fu}, it will serve as a guideline
in order to implement additional results. Therefore, following the analysis given in
\cite{A}, we introduce the NLS equation
\begin{align}\label{eq:NLS}
\begin{split}
  &i u_t + u_{xx} + 2 |u|^2u = 0, \\
  &u(0,x) = u_0(x)
\end{split}
\end{align}
for \(u(t,x)\colon \R \times \R \mapsto \C\) and the initial condition \(u_0(x)\).
The equation can be expressed in an equivalent compatibility condition of the following
linear spectral problems
\begin{align}\label{eq:jost}
\begin{split}
  &\js_x = U\js,\\
  &\js_t = V\js,
\end{split}
\end{align}
where \(\js(t,x,\lz)\) and the matrix operators
\begin{equation}\label{eq:UV}
  U= -i \lz \st + Q, \quad
  V=-2 i\lz^2 \st +\widetilde{Q}
\end{equation}
are \(2\times2\) matrices. The potentials \(Q\) and \(\widetilde{Q}\)
of \(U\) and \(V\) are defined by
\begin{equation*}
  Q(t,x)=
  \begin{pmatrix}
    0 & u \\
    -u^\ast & 0
  \end{pmatrix},\quad
  \widetilde{Q}(t,x,\lz)=
  \begin{pmatrix}
    i|u|^2 & 2\lz u+iu_x \\
    -2\lz u^\ast+iu_x^\ast & -i|u|^2
  \end{pmatrix}\quad \text{and }
  \st=
  \begin{pmatrix}
    1 & 0 \\
    0 & -1
  \end{pmatrix}.
\end{equation*}
In this context, the matrices \(U\) and \(V\) form a so-called Lax pair,
depending not only on \(t\) and \(x\), but also on a spectral parameter \(\lz\).
Hereafter, the asterix denotes the complex conjugate, \(\C_+=\{\lz\in \C \colon
\Im(\lz)>0\}\) as well as \(\C_-=\{\lz\in \C \colon \Im(\lz)<0\}\)
and \(\js^\intercal\) is the transpose of \(\js\).
For a solution \(\js(t,x,\lz)\) of the Lax system~\eqref{eq:jost} the compatibility
condition \(\js_{tx} = \js_{xt}\) for all \(\lambda \in \C\)
is equivalent to \(u(t,x)\) satisfying the NLS equation \eqref{eq:NLS}. Moreover,
we will refer to \(U\) and \(V\) as the \(x\) and \(t\) part of the Lax pair,
respectively. In that regard, given a sufficiently fast decaying function
\(u(t,x)\to0\) and derivative \(u_x(t,x)\to0\) as \(|x|\to\infty\), it is reasonable
to assume that there exist \(2\times 2\)-matrix-valued solutions, we call modified Jost
solutions under time evolution, \(\mjs(t,x,\lz)=\js(t,x,\lz) e^{i\theta(t,x,\lz)\st}\),
where \(\theta(t,x,\lz)=\lz x+ 2 \lz^2 t\), of the modified Lax system
\begin{equation*}
  \mjs_x + i \lambda [\st,\mjs] = Q\mjs,\qquad
  \mjs_t + 2i\lambda^2[\st,\mjs]= \widetilde{Q}\mjs
\end{equation*}
with constant limits as \(x\to\pm\infty\) and for all \(\lz\in\R\),
\begin{equation*}
  \mjs_\pm(t,x,\lz) \to \I,  \qquad \text{as }x\to\pm\infty.
\end{equation*}
They are solutions to the following Volterra integral equations:
\begin{equation}\label{eq:Vie}
\begin{aligned}
  \mjs_-(t,x,\lz) &= \I + \int_{-\infty}^{x}
  e^{-i\theta(0,x-y,\lz)\st} Q(t,y) \mjs_-(t,y,\lz) e^{i\theta(0,x-y,\lz)\st}  \id y, \\
  \mjs_+(t,x,\lz) &= \I - \int_{x}^{\infty}
  e^{-i\theta(0,x-y,\lz)\st} Q(t,y) \mjs_+(t,y,\lz) e^{i\theta(0,x-y,\lz)\st}  \id y.
\end{aligned}
\end{equation}
\begin{lem}
Let \(u(t,\cdot)\in H^{1,1}(\R)=\{f\in L^2(\R)\colon xf,f_x\in L^2(\R)\}\).
Then, for every \(\lz\in\R\), there exist unique solutions \(\mjs_\pm(t,\cdot,\lz)
\in L^\infty(\R)\) satisfying the integral equations \eqref{eq:Vie}. Thereby,
the second column vector of \(\mjs_-(t,x,\lz)\) and the first column vector
of \(\mjs_+(t,x,\lz)\) can be continued analytically in \(\lz\in \C_-\) and
continuously in \(\lz\in \C_-\cup\R\), while the first column vector
of \(\mjs_-(t,x,\lz)\) and the second column vector of \(\mjs_+(t,x,\lz)\)
can be continued analytically in \(\lz\in \C_+\) and continuously in
\(\lz\in \C_+\cup\R\).
\end{lem}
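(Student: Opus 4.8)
The plan is to solve the Volterra equations \eqref{eq:Vie} by the classical method of successive approximations, with $t$ held fixed (the kernel involves $\theta(0,x-y,\lz)=\lz(x-y)$, so time enters only through $u(t,\cdot)\in H^{1,1}(\R)$). First I would split each matrix equation into scalar components. Writing $\mjs_-=(N_{jk})_{j,k=1,2}$ and using that conjugation by $e^{i\lz(x-y)\st}$ multiplies the $(1,2)$ entry of $Q\mjs_-$ by $e^{-2i\lz(x-y)}$ and the $(2,1)$ entry by $e^{2i\lz(x-y)}$ while fixing the diagonal, and that $Q$ is off-diagonal, one obtains the coupled system
\begin{align*}
  N_{11}(x) &= 1 + \int_{-\infty}^{x} u(t,y)\,N_{21}(y)\id y, &
  N_{21}(x) &= -\int_{-\infty}^{x} e^{2i\lz(x-y)}\, u^\ast(t,y)\,N_{11}(y)\id y, \\
  N_{22}(x) &= 1 - \int_{-\infty}^{x} u^\ast(t,y)\,N_{12}(y)\id y, &
  N_{12}(x) &= \int_{-\infty}^{x} e^{-2i\lz(x-y)}\, u(t,y)\,N_{22}(y)\id y,
\end{align*}
and the analogous system for $\mjs_+$ with $-\int_x^\infty$ replacing $\int_{-\infty}^x$. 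On the integration range for $\mjs_-$ one has $x-y\ge0$, whence $|e^{2i\lz(x-y)}|\le1$ for $\lz\in\C_+\cup\R$ and $|e^{-2i\lz(x-y)}|\le1$ for $\lz\in\C_-\cup\R$; the two inequalities swap for $\mjs_+$, where $x-y\le0$. This isolates the first column $(N_{11},N_{21})^\intercal$ of $\mjs_-$ as the one involving only the factor bounded on $\C_+\cup\R$, the second column as the one bounded on $\C_-\cup\R$, and correspondingly for $\mjs_+$, matching the analyticity pattern claimed.

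Next I would run the Picard iteration. Writing $\mathcal K$ for the Volterra operator, so that \eqref{eq:Vie} reads $\mjs_-=\I+\mathcal K\mjs_-$, boundedness of the conjugating exponentials on the relevant closed half-plane gives $\|\mathcal KM(x)\|\le\int_{-\infty}^{x}|u(t,y)|\,\|M(y)\|\id y$; since $H^{1,1}(\R)\hookrightarrow L^1(\R)$ (by Cauchy--Schwarz, $\|u\|_{L^1}\le\|\langle\cdot\rangle^{-1}\|_{L^2}\,\|\langle\cdot\rangle u\|_{L^2}<\infty$), an induction yields $\|\mathcal K^n\I(x)\|\le\tfrac{1}{n!}\|u(t,\cdot)\|_{L^1}^{\,n}$, uniformly in $x$ and in $\lz$ on that half-plane. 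Hence the Neumann series $\mjs_-=\sum_{n\ge0}\mathcal K^n\I$ converges absolutely in $L^\infty(\R)$, which produces the solution; uniqueness follows by a Gronwall-type argument, the difference of two bounded solutions satisfying the homogeneous Volterra equation and thus vanishing.

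For the analyticity and continuity, I would observe that each iterate $\mathcal K^n\I$ is an $n$-fold iterated integral whose integrand is, for fixed $t$ and $x$, entire in $\lz$ and — on the appropriate closed half-plane and in the appropriate column — dominated by an $L^1$ function of the integration variables that does not depend on $\lz$. Hence each iterate is analytic in the open half-plane and continuous up to $\R$ (differentiation under the integral sign, respectively continuity, being justified by dominated convergence together with Morera's theorem). As the Neumann series converges uniformly on the whole closed half-plane, Weierstrass' theorem carries analyticity in $\C_\pm$ and continuity on $\C_\pm\cup\R$ over to the limit, column by column as identified above. The main obstacle is mostly organizational — tracking which scalar component is tied to which exponential, hence to which half-plane — together with verifying that the decay of $u$ supplies $\lz$-independent dominating functions so that analyticity is preserved in the limit; once the $1/n!$ bound is available, the estimates are routine.
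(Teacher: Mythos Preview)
Your proposal is correct and follows the standard route. Note that the paper does not supply its own proof of this lemma: it is stated as a known fact, with the surrounding discussion pointing to \cite{A} for the analysis. The argument you outline --- split into columns, bound the conjugating exponential on the appropriate closed half-plane, run the Neumann series with the $1/n!$ Volterra bound (using $H^{1,1}(\R)\hookrightarrow L^1(\R)$), get uniqueness from Gr\"onwall, and pass analyticity/continuity to the limit via uniform convergence --- is exactly the classical proof in that reference, and it is also the template the paper itself reuses later when it proves the analogous statement for the $t$-dependent Jost function in Proposition~\ref{p:bij}. So there is nothing to compare: your approach coincides with what the paper relies on.
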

Analogously, the columns of \(\js_\pm(t,x,\lz)\) can be continued analytically and
continuously into the complex \(\lz\)-plane, \(\js_-^{(2)}\) and \(\js_+^{(1)}\)
can be continued analytically in \(\lz\in \C_-\) and continuously in \(\lz\in \C_-\cup\R\),
while \(\js_-^{(1)}\) and \(\js_+^{(2)}\) can be continued analytically in
\(\lz\in \C_+\) and continuously in \(\lz\in \C_+\cup\R\).

The limits of the Jost solutions and the zero trace of the matrix \(U\) gives
\(\det \js_\pm = 1\) for all \(x\in\R\). Further, \( \js_\pm\) are both fundamental
matrix solutions to the Lax system \eqref{eq:jost}, so there exists an \(x\) and
\(t\) independent matrix \(A(\lz)\) such that
\begin{equation*}
   \js_-(t,x,\lz)=\js_+(t,x,\lz)A(\lz),\qquad \lz\in\R.
\end{equation*}
The scattering matrix \(A\) is determined by this system and therefore we can also
write \(A(\lz) = (\js_+(t,x,\lz))^{-1}\js_-(t,x,\lz)\), whereas its
entries can be written in terms of Wronskians. In particular,
\(a_{11}(\lz) = \det[\js_-^{(1)}|\js_+^{(2)}]\) and \(a_{22}(\lz) =
-\det[\js_-^{(2)}|\js_+^{(1)}]\) implying that they can respectively be
continued in \(\lz\in\C_+\) and \(\lz\in\C_-\). The eigenfunction inherit the
symmetry relation of the Lax pair
\begin{equation}\label{eq:jxsym}
  \js_\pm(t,x,\lz)= -\sigma \bigl(\js_\pm(t,x,\lz^\ast)\bigr)^\ast \sigma,
\end{equation}
which directly gives \(a_{22}(\lz) = a_{11}^\ast(\lz^\ast)\) and
\(a_{21}(\lz)=-a_{12}^\ast(\lz)\). The asymptotic behavior of the modified
Jost functions and scattering matrix as \(\lz\to\infty\) is
\begin{align*}
  \mjs_- & =\I+\frac{1}{2i\lz}\st Q+ \frac{1}{2i\lz}\st
  \int_{-\infty}^{x} |u(t,y)|^2 \id y+\mathcal{O}(1/\lz^2), \\
  \mjs_+ & =\I+\frac{1}{2i\lz}\st Q- \frac{1}{2i\lz}\st
  \int_{x}^{\infty} |u(t,y)|^2 \id y+\mathcal{O}(1/\lz^2)
\end{align*}
and \(A(\lz) = \I + \mathcal{O}(1/\lz)\).

Let \(u(t,\cdot)\in H^{1,1}(\R)\) be generic.
That is, \(a_{11}(\lz)\) is nonzero in \(\overline{\C_+}\) except at a finite
number of points \(\lz_1,\dots,\lz_N \in\C_+\), where it has simple zeros
\(a_{11}(\lz_j)=0\), \(a'_{11}(\lz_j)\neq0\), \(j=1,\dots,N\).
This set of generic functions \(u(t,\cdot)\) is an open dense subset of
\(H^{1,1}(\R)\) usually denoted by \(\mathcal{G}\). By the symmetry mentioned above,
\(a_{11}(\lz_j)=0\) if and only if \(a_{22}(\lz_j^\ast)=0\)
for all \(j=1,\dots, N\). At these zeros of \(a_{11}\) and \(a_{22}\),
we obtain for the Wronskians the following relation for \(j=1,\dots, N\),
\begin{equation}\label{eq:bj}
  \js_-^{(1)}(t,x,\lz_j)=b_j\js_+^{(2)}(t,x,\lz_j),\quad
  \js_-^{(2)}(t,x,\bar{\lz}_j)=\bar{b}_j\js_+^{(1)}(t,x,\bar{\lz}_j),
\end{equation}
where we defined \(\bar{\lz}_j=\lz_j^\ast\). Whereas for \(j=1,\dots,N\),
the relations then provide residue relations used in the inverse scattering method
\begin{equation*}
\begin{aligned}
  \Res_{\lz=\lz_j} \Bigl(\frac{\mjs_-^{(1)}}{a_{11}}\Bigr)
  &= C_j e^{2i\theta(t,x,\lz_j)}\mjs_+^{(2)}(t,x,\lz_j), \\
  \Res_{\lz=\bar{\lz}_j} \Bigl(\frac{\mjs_-^{(2)}}{a_{22}}\Bigr)
  &= \bar{C}_j e^{-2i\theta(t,x,\bar{\lz}_j)}\mjs_+^{(1)}(t,x,\bar{\lz}_j),
\end{aligned}
\end{equation*}
where the weights are \(C_j=b_j/a'_{11}(\lz_j)\) and \(\bar{C}_j=
\bar{b}_j/a'_{22}(\bar{\lz}_j)\), and they satisfy the symmetry relations
\(\bar{b}_j=-b_j^\ast\) and \(\bar{C}_j = -C_j^\ast\).

The inverse problem can be formulated using the jump matrix
\begin{equation*}
  J(t,x,\lz) =
  \begin{pmatrix}
    |\rho(\lz)|^2 & e^{-2i\theta(t,x,\lz)}\rho^\ast(\lz) \\
    e^{2i\theta(t,x,\lz)}\rho(\lz) & 0
  \end{pmatrix},
\end{equation*}
where the reflection coefficient is \(\rho(\lz) = a_{12}(\lz)/a_{11}(\lz)\)
for \(\lz\in\R\). Defining sectionally meromorphic functions
\begin{equation*}
  M_-=(\mjs_+^{(1)},\mjs_-^{(2)}/a_{22}),\qquad
  M_+=(\mjs_-^{(1)}/a_{11},\mjs_+^{(2)}),
\end{equation*}
we can give the method of recovering the solution \(u(t,x)\) from the scattering data.
\begin{rhp}\label{rhp}
For given scattering data \((\rho,\{\lz_j,C_j\}_{j=1}^N)\) as well as \(t,x\in \R\),
find a \(2\times2\)-matrix-valued function \(\C\setminus\R\ni\lz\mapsto
M(t,x,\lz)\) satisfying
\begin{enumerate}
  \item \(M(t,x,\cdot)\) is meromorphic in \(\C\setminus\R\).
  \item \(M(t,x,\lz) = 1 + \mathcal{O}(1/\lz)\) as \(|\lz|\to\infty\).
  \item Non-tangential boundary  values \(M_\pm(t,x,\lz)\) exist, satisfying the
  jump condition \(M_+(t,x,\lz)=M_-(t,x,\lz)(1+J(t,x,\lz))\) for
  \(\lz\in\R\).
  \item \(M(t,x,\lz)\) has simple poles at \(\lz_1,\dots,\lz_N,
      \bar{\lz}_1,\dots,\bar{\lz}_N\) with
  \begin{align*}
  \Res_{\lz=\lz_j} M(t,x,\lz) &=\lim_{\lz\to\lz_j} M(t,x,\lz)
  \begin{pmatrix}
    0 & 0 \\
    C_j e^{2i\theta(t,x,\lz_j)} & 0
  \end{pmatrix},\\
  \Res_{\lz=\bar{\lz}_j} M(t,x,\lz) &=\lim_{\lz\to\bar{\lz}_j}
  M(t,x,\lz)\begin{pmatrix}
    0 & \bar{C}_j e^{-2i\theta(t,x,\bar{\lz}_j)} \\
    0 & 0
  \end{pmatrix}.
  \end{align*}
\end{enumerate}
\end{rhp}
After regularization, the Riemann--Hilbert problem \ref{rhp} can be solved via
Cauchy projectors, and the asymptotic behavior of \(M_\pm(t,x,\lz)\) as \(\lz\to \infty\)
yields the reconstruction formula
\begin{align*}
  u(t,x) & = -2i\sum_{j=1}^{N} C_j^\ast e^{-2i\theta(t,x,\lz_j^\ast)}
  [\mjs_+^\ast]_{22}(t,x,\lz_j)\\
   & \quad - \frac{1}{\pi} \int_{-\infty}^{\infty}  e^{-2i\theta(t,x,\lz)}
   \rho^\ast(\lz) [\mjs_+^\ast]_{22}(t,x,\lz) \id\lz.
\end{align*}
In the reflectionless case, we have \(\rho(\lz)=0\) for \(\lz\in\R\) and
the Riemann--Hilbert problem can be reduced to an algebraic system
\begin{equation*}
\begin{aligned}
  \mjs_+^{(1)}(t,x,\lz_\ell) &= e_1 + \sum_{j=1}^{N}
  \frac{C_j e^{2i\theta(t,x,\lz_j)}\mjs_+^{(2)}(t,x,\lz_j)}{
  (\bar{\lz}_\ell-\lz_j)},\\
  \mjs_+^{(2)}(t,x,\lz_j) &= e_2 + \sum_{m=1}^{N}
  \frac{\bar{C}_m e^{-2i\theta(t,x,\bar{\lz}_j)}\mjs_+^{(1)}(t,x,\lz_m)}{
  (\lz_j-\bar{\lz}_m)}
\end{aligned}
\end{equation*}
for \(\ell,j=1,\dots,N\). The one-soliton solution is obtained for \(N=1\), we obtain
\begin{align*}
  [\mjs_+]_{21}(t,x,\lz_1) &= -\frac{C^\ast_1}{\lz_1-\lz_1^\ast}
  e^{-2i\theta(t,x,\lz_j^\ast)}\Biggl[1-\frac{|C_1|^2
  e^{2i(\theta(t,x,\lz_1)-\theta(t,x,\lz_1^\ast))}}{(\lz_1-\lz_1^\ast)^2}\Biggr]^{-1},\\
  [\mjs_+]_{22}(t,x,\lz_1) &= \Biggl[1-\frac{|C_1|^2
  e^{2i(\theta(t,x,\lz_1)-\theta(t,x,\lz_1^\ast))}}{(\lz_1-\lz_1^\ast)^2}\Biggr]^{-1}
\end{align*}
such that the one-soliton solution with \(\lz_1 = \xi+i\eta\) can be written as
\begin{equation*}
  u(t,x)=-2i\eta\frac{C_1^\ast}{|C_1|} e^{-i(2\xi x+4(\xi^2-\eta^2)t)}
  \sech\Bigl(2\eta(x+4\xi t)-\log\frac{|C_1|}{2\eta}\Bigr).
\end{equation*}
We change the notation so that \(u(t,x) = u_{1s}(t,x;\xi,\eta,x_1,\varphi_1)\) has the
following expression
\begin{equation}\label{eq:1sol}
  u_{1s}(t,x;\xi,\eta,x_1,\varphi_1)=2\eta
  e^{-i(2\xi x+4(\xi^2-\eta^2)t + (\varphi_1+\pi/2))}\sech(2\eta(x+4\xi t-x_1)),
\end{equation}
where \(\varphi_1= \arg(C_1)\) and \(x_1=\frac{1}{2\eta}\log\frac{|C_1|}{2\eta}\). 
\section{B\"acklund transformation}\label{sc:back}
Obtaining solutions for nonlinear partial differential equations is usually
not as easy as it may seem, given, we just constructed a one-soliton solution
for the NLS equation by the inverse scattering method. Apart from this method,
there is also the so-called B\"acklund transformation, which can be used to
obtain new solutions from a known solution by solving a system of integrable PDEs.
In the paper \cite{Ca2}, the author looked more generally at transformations
as the B\"acklund transformations and their implementation as boundary condition
at a given point on the line. In that regard, consider the Lax system
\eqref{eq:jost} for \(U\) and \(V\) as in \eqref{eq:UV}. By defining
\begin{equation*}
  \bjs(t,x,\lz)=B(t,x,\lz)\js(t,x,\lz),
\end{equation*}
we also consider the analog system
\begin{equation*}
  \begin{cases}
    \bjs_x = \widetilde{U}\bjs,\\
    \bjs_t = \widetilde{V}\bjs,
  \end{cases}
\end{equation*}
for \(\widetilde{U}\) and \(\widetilde{V}\) as in \eqref{eq:UV}
with \(u\) replaced by \(\tilde{u}\). Then, this definition gives us
partial differential equations for the so-called \emph{defect matrix} \(B\)
for any \(t\) and \(x\),
\begin{align}\label{eq:defect}
\begin{split}
  B_x&=\widetilde{U} B - B U,\\
  B_t&=\widetilde{V} B - B V.
\end{split}
\end{align}
Assuming that the defect matrix is linear in \(\lz\), one can show that
the matrix is of a particular form, see Proposition 2.2 in \cite{Ca2}.
\begin{prop}\label{p:def}
The defect matrix \(B=\lz B^{(1)}+ B^{(0)}\), relating Lax systems
corresponding to \(\tilde{u}\) and \(u\), has the following general
form in terms of NLS class equations
\begin{equation}\label{eq:genB}
  B(t,x,\lz)=2\lz\I +
  \begin{pmatrix}
    \alpha\pm i\sqrt{\beta^2-|\tilde{u} - u|^2} & -i(\tilde{u} - u) \\
    -i(\tilde{u} - u)^\ast & \alpha\mp i\sqrt{\beta^2-|\tilde{u} - u|^2}
  \end{pmatrix},
\end{equation}
where \(\alpha\in\R\), \(\beta\in \R\) are the parameter of the
defect and in particular, independent of \(t\) and \(x\).
\end{prop}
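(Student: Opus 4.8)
The plan is to insert the ansatz $B=\lz B^{(1)}+B^{(0)}$ into both equations of \eqref{eq:defect} and to collect coefficients of powers of $\lz$. Because $U$ and $\widetilde U$ are affine in $\lz$ while the potential appearing in $V$ and $\widetilde V$ is itself affine in $\lz$ (so $V,\widetilde V$ are quadratic), the first equation of \eqref{eq:defect} decomposes into relations at orders $\lz^{2},\lz^{1},\lz^{0}$ and the second into relations at orders $\lz^{3},\dots,\lz^{0}$. The $\lz^{2}$-term of $B_x=\widetilde U B-BU$ reads $[\st,B^{(1)}]=0$, so $B^{(1)}$ is diagonal; working through the remaining coefficient relations then shows $B^{(1)}_x=B^{(1)}_t=0$, so $B^{(1)}$ is a constant diagonal matrix. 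Imposing that $B$ respect the reality symmetry \eqref{eq:jxsym} carried by the Jost solutions of both systems forces the two diagonal entries of $B^{(1)}$ to be complex conjugate, and with the freedom to rescale $B$ (the system \eqref{eq:defect} is linear and homogeneous in $B$, and the defect parameters are required real) one normalizes $B^{(1)}=2\I$; this is the normalization in which \eqref{eq:genB} is stated.

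With $B^{(1)}=2\I$, the off-diagonal part of the $\lz^{1}$-term of the $x$-equation determines the off-diagonal entries of $B^{(0)}$ to be exactly $-\mathrm{i}(\tilde u-u)$ and $-\mathrm{i}(\tilde u-u)^{\ast}$, while \eqref{eq:jxsym} applied to $B^{(0)}$ yields $d=a^{\ast}$ for its diagonal entries $a,d$. To pin down $a$, the decisive observation is that $\det B(t,x,\lz)$ is independent of $t$ and $x$: by Jacobi's formula together with \eqref{eq:defect}, $\partial_x\det B=\det B\,(\mathrm{tr}\,\widetilde U-\mathrm{tr}\,U)=0$ and $\partial_t\det B=\det B\,(\mathrm{tr}\,\widetilde V-\mathrm{tr}\,V)=0$, all four traces vanishing. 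Expanding $\det B=\det(2\lz\I+B^{(0)})=4\lz^{2}+2\lz\,\mathrm{tr}\,B^{(0)}+\det B^{(0)}$ then shows $\mathrm{tr}\,B^{(0)}$ and $\det B^{(0)}$ are constant (and real, again by the symmetry); write $\mathrm{tr}\,B^{(0)}=2\alpha$ and $\det B^{(0)}=\alpha^{2}+\beta^{2}$, defining $\alpha$ and $\beta$. Since the off-diagonal entries are already known, $\det B^{(0)}=ad+|\tilde u-u|^{2}$, so $a$ and $d$ are the two roots of $z^{2}-2\alpha z+(\alpha^{2}+\beta^{2}-|\tilde u-u|^{2})=0$, namely $a=\alpha\pm\mathrm{i}\sqrt{\beta^{2}-|\tilde u-u|^{2}}$ and $d=a^{\ast}$, the two signs being the two roots; one checks $\beta^{2}=(\Im a)^{2}+|\tilde u-u|^{2}\ge 0$, so $\beta\in\R$, and $\alpha,\beta$ are manifestly $t$- and $x$-independent. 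This is precisely \eqref{eq:genB}. Finally I would observe that the leftover $\lz^{0}$-relations in \eqref{eq:defect} then hold identically exactly when both $u$ and $\tilde u$ solve the NLS equation — they reproduce the defect relations between $u$ and $\tilde u$ recalled in the Introduction — so that a $B$ of the form \eqref{eq:genB} does intertwine the two Lax systems.

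The power-of-$\lz$ bookkeeping in the opening step is routine but lengthy; the part I expect to require the most care is fixing the precise normalization $B^{(1)}=2\I$ and propagating the reality reduction \eqref{eq:jxsym} consistently through the argument, since it is this symmetry that simultaneously produces $d=a^{\ast}$, the reality of $\alpha$ and $\beta$, and the compatibility of the whole construction with a real scalar leading coefficient.
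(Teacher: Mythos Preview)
The paper does not give its own proof of this proposition; it merely cites Proposition~2.2 of \cite{Ca2}. So there is nothing in the paper to compare your argument against, and your sketch has to be judged on its own merits.

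Your overall strategy is the standard and correct one: expand \eqref{eq:defect} in powers of $\lz$, use the commutator relation at top order to make $B^{(1)}$ diagonal, read off the off-diagonal of $B^{(0)}$ at the next order, and then exploit the constancy of $\det B$ (via Jacobi's formula and $\mathrm{tr}\,U=\mathrm{tr}\,\widetilde U=\mathrm{tr}\,V=\mathrm{tr}\,\widetilde V=0$) together with the reality reduction to pin down the diagonal of $B^{(0)}$. That part is clean and correct, including the quadratic for $a,d$ and the verification that $\beta^{2}\ge 0$.

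The one genuine gap is exactly the one you flag: passing from ``$B^{(1)}$ is a constant diagonal matrix with conjugate entries'' to ``$B^{(1)}=2\I$''. The symmetry $\sigma B(\lz^\ast)^\ast\sigma^{-1}=B(\lz)$ only yields $b_2=b_1^\ast$, and rescaling by a \emph{real} constant (the only rescaling compatible with that symmetry) cannot turn $\diag(b_1,b_1^\ast)$ into a multiple of $\I$ unless $b_1$ is already real. If instead you rescale by a complex constant you lose the symmetry, and in any case the $(1,2)$-entry of $B^{(0)}$ becomes $-\tfrac{i}{2}(b_2\tilde u-b_1 u)$, which equals $-i(\tilde u-u)$ only when $b_1=b_2$. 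So the claim ``one normalizes $B^{(1)}=2\I$'' needs an argument that actually forces $b_1=b_2$; in \cite{Ca2} this comes from working through more of the $\lz^{1}$- and $\lz^{0}$-relations of the $t$-equation and using that \emph{both} systems are of the same NLS form. You should either supply that argument or state $B^{(1)}\propto\I$ as an assumption on the class of defects considered.
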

Here, it is important to note that the root
\(\sqrt{\beta^2-|\tilde{u}-u|^2}\) is real.
This fact follows from the symmetry \(U(t,x,\lz^\ast)^\ast =
\sigma U(t,x,\lz) \sigma^{-1}\) which transfers to \(B\), where
\begin{equation*}
  \sigma=\begin{pmatrix}
      0 & 1 \\
      -1 & 0
    \end{pmatrix}.
\end{equation*} 
\subsection{Darboux transformation}\label{sc:darb}
The structure of the Lax system permits the application of a, as we will see, very close
related class of transformations, the Darboux transformations. In this section,
we will outline the utilization of this method to directly obtain soliton solutions
of the NLS equation. Darboux transformations are known to provide an algebraic procedure
to derive soliton solutions of various integrable PDEs. In particular, they can be viewed
as gauge transformation acting on forms of the Lax pair \(U\), \(V\). Here,
it is meant to be applied while preserving certain constraints to
transform an ``old'' solution into a ``new'' solution. For that, the undressed Lax
system~\eqref{eq:jost} will be denoted as \(U[0]\), \(V[0]\) and \(\js[0]\) and
the transformed system as \(U[N]\), \(V[N]\) and \(\js[N]\),
whereby the solutions \(\js[0]\) and \(\js[N]\) are \(2\times1\) column solutions.
In terms of the B\"acklund transformation, \(U[0]\), \(V[0]\) and \(\js[0]\)
can be seen as the standard Lax system and \(U[N]\), \(V[N]\) and \(\js[N]\)
as the analog system with \(u\) replaced by \(\tilde{u}\).

Suppose that it is possible to construct a gauge-like transformation
\begin{equation*}
\js[1] = D[1] \js[0]
\end{equation*}
such that the structure of matrices
\begin{align}\label{eq:D1}
\begin{split}
  &U[1]=(D[1]_x +D[1]U[0])D[1]^{-1},\\
  &V[1]=(D[1]_t +D[1]V[0])D[1]^{-1}
\end{split}
\end{align}
is identical with the structure of \(U[0]\), \(V[0]\), i.e.\@ \(Q[0]\) becomes \(Q[1]\) with
updated off-diagonal entries. Indeed, if \(U[1]\) and \(V[1]\) satisfy \eqref{eq:D1},
then the undressed Lax system~\eqref{eq:jost} can be transformed into
\begin{align*}
\begin{split}
  &\js[1]_x = U[1]\js[1],\\
  &\js[1]_t = V[1]\js[1].
\end{split}
\end{align*}
At this point, it seems that a pair of solutions \(\js[0]\) and \(\js[1]\) is needed to determine
\(D[1]\). However, if and only if we are able to compute \(D[1]\) solely by a solution \(\js[0]\) of
the undressed Lax system \eqref{eq:jost}, we can construct new solutions and
then we call \(D[1]\) dressing matrix. Indeed, given a column solution
\(\js_1 = (\mu_1,\nu_1)^\intercal\) of the undressed Lax system at \(\lambda = \lambda_1\),
we write \(D[1]\) in the following form, which satisfies the requirement,
\begin{equation*}
D[1] = (\lambda-\lambda^\ast_1)\I+(\lambda^\ast_1-\lambda_1) P[1],\quad
P[1] = \frac{\js_1 \js_1^\dagger}{\determ{\js}{1}},
\end{equation*}
where \(\I\) is the identity and \(P[1]\) is a projector matrix.
Here, \(\js_1^\dagger\) denotes the transpose complex conjugate of \(\js_1\).
The important point of this method is that the solution \(u[1]\) can be
reconstructed through the first line of \eqref{eq:D1} or in terms of matrices
\begin{equation*}
Q[1]=Q[0] - i (\lambda_1-\lambda^\ast_1)[\st,P[1]],
\end{equation*}
which is called reconstruction formula.
Technically, the Darboux transformation can be summarized in the following way:
Suppose we have a system, of which we know the solution. Then, transforming the system via
the dressing matrix allows to construct the solution to a \emph{different} system.

Especially, if both systems correspond to the same PDE, the reconstruction formula lets us
obtain a new solution of the PDE.
Therefore, in advance a good understanding of the set of solutions of the NLS equation is instrumental,
since they are decisive when it comes to solutions of the Lax system.
However, there is only a limited number of significant cases known,
e.g.\@ the zero solution. In this regard, using the zero solution as seed solution,
i.e.\@ \(u[0]=0\), one can construct among other solutions a one-soliton solution \(u[1]\),
see \eqref{eq:1sol}. This will be of interest in the following studies.

Given \(N\) linear independent column solutions \(\js_j = (\mu_j,\nu_j)^\intercal\)
of the undressed Lax system \eqref{eq:jost} evaluated at \(\lambda=\lambda_j\),
\(j=1\dots N\), the basic dressing matrix \(D[1]\) may be iterated in the following sense
\begin{equation*}
D[N] = ((\lambda-\lambda^\ast_N)\I+(\lambda^\ast_N-\lambda_N) P[N]) \cdots
((\lambda-\lambda^\ast_1)\I+(\lambda^\ast_1-\lambda_1) P[1]),
\end{equation*}
where \(P[j]\) are projector matrices defined by
\begin{equation}\label{eq:Pj}
P[j] = \frac{\js_j[j-1] \js^\dagger_j[j-1]}{\determz{\js}{j}{[j-1]}},
\quad \js_j[j-1]=D[j-1]\big\rvert_{\lambda=\lambda_j} \js_j.
\end{equation}
Note that it is sufficient for the \(\lambda=\lambda_j\), \(j=1\dots N\), to be distinct
in order for the solutions to be linearly independent. Analogously to \(N=1\),
for the reconstruction formula we need to insert \(\js[N] = D[N] \js[0]\)
into the transformed Lax system
\begin{align*}
\begin{split}
  &\js[N]_x = U[N]\js[N],\\
  &\js[N]_t = V[N]\js[N],
\end{split}
\end{align*}
and extract the information of the coefficient of \(\lambda^{N-1}\) of the first line. 
Then, the reconstruction formula can be computed as
\begin{equation}\label{eq:recon}
Q[N]=Q[0] - i \sum_{j=1}^{N} (\lambda_j-\lambda^\ast_j)[\st,P[j]].
\end{equation}
In the course of this paper we will also work with the half-line as domain,
for which this construction can be done in the same way, resulting in a solution
\(u[N]\) on the half-line given through \eqref{eq:recon}.

Since the Darboux transformation is a matrix transforming undressed Lax systems
of the NLS equation to Lax systems of the NLS equation, there is a correspondence
between matrices from the Darboux transformation and of defect form. We will address
this idea in the next remark.
\begin{remark}\label{r:defdar}
\begin{enumerate}[label=(\roman*)]
\item The one-fold dressing matrix \(D[1]\), constructed by \(\C\setminus\R\ni\lz_1 =
\xi+ i\eta\) and \(\js_1\in \C^2\), satisfies \eqref{eq:defect}
with \(\widetilde{U}=U[1]\) and \(U=U[0]\). Given \emph{certain} information,
we can, up to a function of \(\lz\), write \(D[1]\) in the form of a defect matrix
with \(\gamma\in \R\), \(\delta\in \R\setminus\{0\}\), that is
\begin{equation}\label{eq:darb1}
  D[1] = \lz \I + \frac{1}{2}
  \begin{pmatrix}
    -2\gamma \pm i\sqrt{4\delta^2-|u[1]-u[0]|^2}& -i(u[1]-u[0]) \\
    -i(u[1]-u[0])^\ast & -2\gamma \mp i\sqrt{4\delta^2-|u[1]-u[0]|^2}
  \end{pmatrix}.
\end{equation}
\item The defect matrix \(B= 2\lz\I+ B^{(0)}\) in general form
\eqref{eq:genG} with \(\alpha, \beta \in \R\), applied in the context of
NLS class equations, admits a projector matrix \(P_0\) if \(\beta\neq0\).
Take \(\lz_0=-\frac{\alpha- i\beta}{2}\) and define
\begin{equation*}
  P_0=\frac{1}{2(\lz_0^\ast-\lz_0)} (B^{(0)}+\lz_0^\ast \I).
\end{equation*}
Then, we can write \(B\), up to a function of \(\lz\), in the form of a dressing matrix
\begin{equation*}
  B = 2(\lz-\lz_0^\ast)\I + 2(\lz_0^\ast-\lz_0) P_0.
\end{equation*}
\end{enumerate}
\end{remark}
This remark states that: if \(\beta\neq0\) (or \(\eta\neq0\)),
the general form of a defect matrix and a one-fold dressing matrix
are in certain cases interchangeable. In particular, this means that
for a defect matrix \(B(t,x,\lz)\) with a parameter \(\beta\neq0\), we know that
there exists a vector \(\upsilon_0\) which is in the kernel of \(B(t,x,\lz_0)\).
Conversely, taking a matrix polynomial of order one with a kernel vector
\(\upsilon_0\) at \(\lz=\lz_0\) corresponding to a one-fold dressing matrix,
i.e.\@ it solves \eqref{eq:D1}, we might be able to write it in the form of
a defect matrix \eqref{eq:defect}. However, it should be noted that coming
from a dressing matrix \(D[1]\), it is also a priori not clear what the
corresponding sign in front of the root in the \((11)\)-entry and accordingly
the \((22)\)-entry has to be. In some cases, as we will see, this information can be
extracted from the kernel vectors of the dressing matrix. On the other hand, coming
from a defect matrix \(B\), we only have the projector matrix in terms of expressions
of the solution side without knowledge of how the kernel vectors look like in terms of
the spectral side. Again, in some cases, it is possible to obtain information on the
kernel vectors from the signs in front of the root. 
\subsection{Localized B\"acklund transformation}\label{sc:def}
The B\"acklund transformation has also been investigated as frozen at a specific
point \(x_{f}\) and with that in mind as a means to generate integrable boundary
value systems. We will introduce the idea of this method in this section.
Restricting \(\tilde{u}\) and \(u\) to solutions of the NLS equation on different
half-lines and therefore also their Lax systems respectively to
\((t,x)\in\R_+\times\R_-\) and \((t,x)\in\R_+\times\R_+\),
we simultaneously restrict \eqref{eq:defect} to \(x_{f}=0\) on the line.
To distinguish between the B\"acklund transformation and the localized
B\"acklund transformation, we denote the denote the defect matrix by
\(B(t,x,\lz)\) as before and the localized defect matrix by \(G(t,0,\lz)\).
For any \(t\in\R_+\) and \(x=0\), we call the relations
\begin{align}\label{eq:locdef}
\begin{split}
  G_x&=\widetilde{U} G - G U,\\
  G_t&=\widetilde{V} G - G V,
\end{split}
\end{align}
\emph{boundary constraint}. From Proposition \ref{p:def} it follows that
assuming \(G(t,0,\lz)\) is linear in \(\lz\), it admits the form
\begin{equation}\label{eq:genG}
  G(t,0,\lz)=2\lz\I +
  \begin{pmatrix}
    \alpha\pm i\sqrt{\beta^2-|\tilde{u} - u|^2} & -i(\tilde{u} - u) \\
    -i(\tilde{u} - u)^\ast & \alpha\mp i\sqrt{\beta^2-|\tilde{u} - u|^2}
  \end{pmatrix}
\end{equation}
with \(\alpha, \beta \in \R\).
In \cite{Ca2}, the defect matrix \(B(t,x,\lz)\) has been discussed
not only for equations of NLS type for \(u\) and \(\tilde{u}\), but as an
universal approach to Lax pair systems. Hence, different defect matrices
and consequently different localized defect matrices could be identified
corresponding to the Lax systems of various PDEs.


Also note, that \eqref{eq:locdef} at \(x=0\) has a structural difference to the boundary
constraint for the half-line, see \cite{Zh}. For the model on two half-lines, we
are not relating \(V(t,0,\lz)\) and \(V(t,0,-\lz)\) as for the half-line,
but \(V(t,0,\lz)\) and \(\widetilde{V}(t,0,\lz)\). Furthermore, the relation
of \(V(t,0,\lz)\) and \(\widetilde{V}(t,0,\lz)\) immediately implies the
relation for \(U(t,0,\lz)\) and \(\widetilde{U}(t,0,\lz)\).

Let us discuss, which information is needed in order to determine the sign in front of
the root in the \((11)\)-entry, considering we already constructed the localized
defect form from the one-fold dressing matrix at \(x=0\). In that regard, important
properties of a B\"acklund transformation with respect to \(x\) has been in detail discussed
in the paper \cite{DZ}. In particular, it was shown that the transformation
\(\mathcal{B}^+_{\Im(\lz_1)}(\js_1) \colon u \mapsto \tilde{u}
= \mathcal{B}^+_{\Im(\lz_1)}(\js_1) u\), the B\"acklund transformation of \(u(t,\cdot)\)
with respect to \(\{\Im(\lz_1),\js_1\}\) on \(\R_+\), is a bijection
from \(H^{1,1}(\R_+)\) onto \(H^{1,1}(\R_+)\). Similarly, we want to analyze
the localized defect matrix as B\"acklund transformation with \(\beta\neq0\)
and with respect to \(t\). For functions \(f(\cdot,0,\lz)\), we introduce the
function spaces
\begin{align*}
&H^{0,1}_t(\R_+) = \{f\in L^2(\R_+)\colon t f \in L^2(\R_+)\},\\
&H^{1,1}_t(\R_+) = \{f\in L^2(\R_+)\colon \partial_t f, t f \in L^2(\R_+)\}
\end{align*}
and state the following lemma, which will be essential in the proof.
\begin{lem}\label{l:est}
Let \(f(\cdot,0,\lz)\in H^{0,1}_t(\R_+)\), \(g(\cdot,0,\lz)\in H^{1,1}_t(\R_+)\) and
\(\Im(\lz^2)< 0\). Then,
\begin{align*}
  \Bigl\|\int_{\langle t\rangle}^{\infty}f(\tau,0,\lz) g(\tau,0,\lz)\id\tau
  \Bigr\|_{H^{1,1}_t(\R_+)}
  & \leq c \|f(\cdot,0,\lz)\|_{H^{0,1}_t(\R_+)}
  \|g(\cdot,0,\lz)\|_{H^{1,1}_t(\R_+)}, \\
  \Bigl\|\int_{\langle t\rangle}^{\infty}f(\tau,0,\lz) e^{-4\Im(\lz^2)
  (\langle t\rangle-\tau)}\id\tau \Bigr\|_{H^{1,1}_t(\R_+)}
  & \leq c \|f(\cdot,0,\lz)\|_{H^{0,1}_t(\R_+)},
\end{align*}
where \(c\) depends on \(\lz\).
\end{lem}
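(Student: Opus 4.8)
The plan is to treat both inequalities as weighted $L^2$ estimates for an integral operator whose kernel is supported on $\{\tau \geq \langle t\rangle\}$, and to reduce everything to two ingredients: (i) the elementary fact that $\langle t\rangle = \sqrt{1+t^2}$ is a Lipschitz function of $t$ with $|\partial_t \langle t\rangle| \leq 1$, so that differentiating the lower limit of the integral produces only bounded terms, and (ii) the exponential decay $e^{-4\Im(\lz^2)(\langle t\rangle - \tau)} = e^{4|\Im(\lz^2)|(\tau - \langle t\rangle)}$... wait, I must be careful with signs here. Since $\Im(\lz^2) < 0$ and $\tau \geq \langle t\rangle$, we have $\langle t\rangle - \tau \leq 0$, so $-4\Im(\lz^2)(\langle t\rangle-\tau) = 4|\Im(\lz^2)|(\langle t\rangle - \tau) \leq 0$, giving genuine exponential decay in $\tau - \langle t\rangle$. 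Good. So first I would fix $\lz$ and abbreviate $\kappa = 4|\Im(\lz^2)| > 0$.

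For the first inequality, write $F(t) = \int_{\langle t\rangle}^\infty f(\tau) g(\tau)\,\id\tau$. To bound $\|F\|_{L^2(\R_+)}$, note $|F(t)| \leq \|f\|_{L^2}\, \|g\mathds{1}_{[\langle t\rangle,\infty)}\|_{L^2} \leq \|f\|_{L^2}\|g\|_{L^2}$; but to get the weighted $H^{1,1}_t$ bound one wants a factor of $\langle t\rangle^{-1}$ or $t^{-1}$ decay to kill the weight $t$, which is \emph{not} available from a bare integral over $[\langle t\rangle,\infty)$. This is the point where the weight on $f$ must be exploited: split $\int_{\langle t\rangle}^\infty f\,g = \int_{\langle t\rangle}^\infty \frac{\tau f(\tau)}{\tau} g(\tau)\,\id\tau$, so that $|t F(t)| \lesssim \langle t\rangle \int_{\langle t\rangle}^\infty \tau^{-1} |\tau f(\tau)|\,|g(\tau)|\,\id\tau \leq \langle t\rangle \cdot \langle t\rangle^{-1} \|\tau f\|_{L^2} \|g\|_{L^2} = \|f\|_{H^{0,1}_t}\|g\|_{L^2}$, using $\tau^{-1} \leq \langle t\rangle^{-1}$ on the domain of integration and then Cauchy--Schwarz; the $\tau$-independent bound then follows after taking $L^2$ in $t$ over $\R_+$ (which contributes another finite constant, or one simply observes the right side is already a constant and $\R_+$ has no issue because... actually $L^2(\R_+)$ of a constant diverges, so one does the Cauchy--Schwarz the other way: $\|tF\|_{L^2_t}^2 \leq \int_0^\infty \langle t\rangle^2 \|g\|_{L^2}^2 \big(\int_{\langle t\rangle}^\infty \tau^{-2}|\tau f|^2\,\id\tau\big)\,\id t$ and use $\int_{\langle t\rangle}^\infty \tau^{-2}\,\id\tau \leq \langle t\rangle^{-1}$, leaving $\int_0^\infty \langle t\rangle \,\|\,\cdot\,\|^2$, which still needs care — so more honestly one keeps one full power of $\langle t\rangle^{-1}$ in reserve and pairs the remaining $\langle t\rangle^{-1/2} |\tau f|$ against $\langle t\rangle^{1/2}$ with Schur's test). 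For the $\partial_t F$ component: $\partial_t F(t) = -\langle t\rangle' f(\langle t\rangle) g(\langle t\rangle)$, and since $|\langle t\rangle'|\leq 1$ and $f g \in L^1$ with $fg$, $\partial_t(fg)$ controlled, the boundary term is an $H^{-}$-type trace that is dominated by $\|f\|_{H^{0,1}_t}\|g\|_{H^{1,1}_t}$ via a one-dimensional Sobolev trace/embedding ($H^1(\R_+) \hookrightarrow L^\infty$, applied to a suitable product). Assembling the three pieces gives the first bound.

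For the second inequality the mechanism is the same but cleaner, because the exponential factor $e^{-\kappa(\tau - \langle t\rangle)}$ (with $\kappa>0$) supplies the integrability for free: $\big|\int_{\langle t\rangle}^\infty f(\tau) e^{-\kappa(\tau-\langle t\rangle)}\,\id\tau\big| \leq \big(\int_{\langle t\rangle}^\infty |f(\tau)|^2\,\id\tau\big)^{1/2} (2\kappa)^{-1/2}$, and the weighted and differentiated versions are handled by the analogue of the trick above — write $f = \tau^{-1}(\tau f)$, observe $\tau^{-1}\leq\langle t\rangle^{-1}$ on the domain, and for $\partial_t$ use that differentiating under the integral hits either the lower limit (producing $\langle t\rangle' f(\langle t\rangle)$, a trace term controlled by $\|f\|_{H^{0,1}_t}$ plus the embedding) or the exponent (producing an extra factor $\kappa \langle t\rangle'$, bounded, so harmless). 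The constant $c$ then depends on $\lz$ through $\kappa = 4|\Im(\lz^2)|$ and blows up as $\lz^2 \to \R$, consistent with the statement.

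The main obstacle, and the step deserving the most care, is getting the weighted $L^2_t$ estimate for the first inequality honestly: a naive Cauchy--Schwarz leaves a logarithmically or polynomially divergent $t$-integral, and the resolution — trading one power of $\langle t\rangle$ against the $\tau^{-2}$ tail of $\int_{\langle t\rangle}^\infty \tau^{-2}\,\id\tau = \langle t\rangle^{-1}$ and invoking Schur's test (or, equivalently, a Hardy-type inequality on $\R_+$) — is where all the bookkeeping lives. Everything involving $\partial_t$ reduces to the trace inequality $|h(\langle t\rangle)| \lesssim \|h\|_{H^1(\R_+)}$ for $h = fg$ and is routine once one notes $|\langle t\rangle'| \leq 1$; the exponential case is strictly easier and I would present it as a corollary of the same computation with the $\tau^{-1}$ replaced by $e^{-\kappa(\tau - \langle t\rangle)}$.
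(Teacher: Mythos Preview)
The paper does not prove this lemma at all; its entire proof reads ``Analogously to the proof in \cite{DZ}.'' So there is no in-paper argument to compare against, and the exercise reduces to whether your sketch can be completed. It mostly can --- the ingredients (Sobolev embedding $H^1_t\hookrightarrow L^\infty$ for $g$, a Hardy-type bound for the tail integral, Young/convolution for the exponential kernel) are the right ones --- but the step you yourself single out as ``the main obstacle'', namely the bound on $\|tF\|_{L^2}$ for $F(t)=\int_{\langle t\rangle}^\infty fg$, is never actually resolved. You cycle through two Cauchy--Schwarz splittings that you correctly note diverge, then gesture at ``Schur's test or a Hardy-type inequality'' without saying which weights make either work; in fact a naive Schur bound on the kernel $K(t,\tau)=\tfrac{t}{\tau}\,g(\tau)\mathds{1}_{\tau\ge\langle t\rangle}$ also fails, since $\int K(t,\tau)\,\id\tau$ is unbounded in $t$ for generic $g\in H^{1,1}_t$. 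The clean fix is to use $t\le\langle t\rangle\le\tau$ on the domain of integration \emph{before} Cauchy--Schwarz:
\[
|tF(t)|\le\int_{\langle t\rangle}^\infty \tau|f||g|\,\id\tau\le\|\tau f\|_{L^2}\Bigl(\int_{\langle t\rangle}^\infty|g|^2\Bigr)^{1/2},
\]
and then squaring, integrating in $t$, and applying Fubini gives $\|tF\|_{L^2}^2\le\|\tau f\|_{L^2}^2\int_0^\infty\tau|g(\tau)|^2\,\id\tau\le\|f\|_{H^{0,1}_t}^2\|g\|_{H^{0,1}_t}^2$, which is what you want.

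A smaller issue: your handling of $\partial_t F$ mislabels the mechanism. The term $\langle t\rangle' f(\langle t\rangle)g(\langle t\rangle)$ (or $\langle t\rangle' f(\langle t\rangle)$ for the second inequality) is not a ``trace''; you need its $L^2(\R_+)$ norm, not a pointwise bound, and since $f\in H^{0,1}_t$ carries no derivative information there is no Sobolev embedding available for $f$. The correct argument is the change of variable $s=\langle t\rangle$, which gives $\|\langle t\rangle' h(\langle t\rangle)\|_{L^2_t}\le\|h\|_{L^2}$ for any $h$; take $h=fg$ and then use $\|fg\|_{L^2}\le\|g\|_{L^\infty}\|f\|_{L^2}$ for the first inequality, and $h=f$ directly for the second.
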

\begin{proof}
Analogously to the proof in \cite{DZ}.
\end{proof}
We skip the part in \cite{DZ}, where the defect matrix is shown to be of the form
of a dressing matrix and immediately assume we are given a spectral parameter
\(\lz_1=\xi+i\eta\) together with a vector \(\js_1=(\mu,\nu)^\intercal\), which is
a solution of the Lax system \eqref{eq:jost} at \(\lz=\lz_1\), from which
we can construct a dressing matrix \(D[1]\) such that
\begin{equation*}
  D[1] = \lz \I+
  \begin{pmatrix}
    -\xi-i\eta\frac{|\mu|^2-|\nu|^2}{|\mu|^2+|\nu|^2} &
    \frac{-2i\eta\mu\nu^\ast}{|\mu|^2+|\nu|^2} \\
    \frac{-2i\eta\mu^\ast\nu}{|\mu|^2+|\nu|^2} &
    -\xi+i\eta\frac{|\mu|^2-|\nu|^2}{|\mu|^2+|\nu|^2}
  \end{pmatrix}.
\end{equation*}
Fixing \(x=0\), we can see the dressing matrix as a connection of two NLS equations
on the respective half-lines \(\R_-\) and \(\R_+\) and hence, it satisfies the
boundary constraint \eqref{eq:locdef} for some solutions \(u_1(t,0)\) and \(u(t,0)\)
of the NLS equations, whereas
\begin{equation}\label{eq:u1}
\begin{aligned}
  u_1(t,0) &= u(t,0) -2\eta \frac{\mu\nu^\ast}{|\mu|^2+|\nu|^2},\\
  (u_1)_x(t,0) &= u_x(t,0) -4 u(t,0)
  i\eta\frac{|\mu|^2-|\nu|^2}{|\mu|^2+|\nu|^2}
  -2\eta\frac{\mu\nu^\ast}{|\mu|^2+|\nu|^2}
  \Bigl(-\xi+i\eta\frac{|\mu|^2-|\nu|^2}{|\mu|^2+|\nu|^2}\Bigr),
\end{aligned}
\end{equation}
where these can be derived by \eqref{eq:D1}.
So that we have that a transformation \(\mathcal{B}^t_{\lz_1}(\js_1)
\colon u\mapsto u_1=\mathcal{B}^t_{\lz_1}(\js_1)u\) mapping
\(u(\cdot,0)\in\LI \to \LI\ni u_1(\cdot,0)\).
The denominator \(|\mu|^2+|\nu|^2\) can not be zero, since \(\js_1\)
is a solution of \(\js_t = (-2i\lz^2\st+\widetilde{Q})\js\) at \(\lz=\lz_1\).
If there exists a \(t_0\in\R_+\) such that \(\js_1=0\), then \((\js_1)_t=0\)
at \(t_0\in\R_+\) and thereby \(\js_1=0\) for every \(t\in\R_+\).
Assuming a nonzero asymptotic limit of \(\js_1\) gives the contradiction.
In particular, the transformation has a left inverse. Take
\(\mathcal{B}^t_{\lz_2}(\js_2)\), where \(\lz_2=\lz_1\),
\(\js_2=(-k_1\nu^\ast,k_1\mu^\ast)^\intercal\), \(k_1\in\C\), we have
\begin{equation*}
  u_2(t,0) = u_1(t,0) +2\eta \frac{\mu\nu^\ast}{|\mu|^2+|\nu|^2}=u(t,0).
\end{equation*}
We define
\begin{equation*}
  X=\{f\in  H^{1,1}_t(\R_+), \, f_x\in  H^{0,1}_t(\R_+)\}
\end{equation*}
with \(u(\cdot,0,\lz) \in \{f\in  H^{0,1}_t(\R_+),
\, f_x\in  H^{0,1}_t(\R_+)\} \subset X\), we can show that
\begin{align*}
  \|\widetilde{Q}(\cdot,0,\lz)\|_{L^1(\R_+)} &\leq \begin{pmatrix}
    \|u^2\|_{L^1(\R_+)} & \|2\lz u + i u_x\|_{L^1(\R_+)} \\
    \|2\lz u^\ast + i u_x^\ast\|_{L^1(\R_+)} & \|u^2\|_{L^1(\R_+)}
  \end{pmatrix}\\
  &\leq \begin{pmatrix}
    \|u\|^2_{L^2(\R_+)} & 2|\lz|\|u\|_{H^{0,1}_t(\R_+)}+
    \|u_x\|_{H^{0,1}_t(\R_+)} \\
    2|\lz|\|u\|_{H^{0,1}_t(\R_+)} + \|u_x\|_{H^{0,1}_t(\R_+)} &
    \|u\|^2_{L^2(\R_+)}
  \end{pmatrix}.
\end{align*}
So that each component is bounded by a constant depending on \(c(\lz)\) multiplied by
\(\|u\|_X\). Then, we can prove that
\begin{prop}\label{p:bij}
\(\mathcal{B}^t_{\lz_1}(\js_1)\), where \(\lz_1\in\C\setminus(\R\cup i\R)\),
maps functions \(u(\cdot,0) \in X\) onto \(\tilde{u}(\cdot,0) \in X\).
\end{prop}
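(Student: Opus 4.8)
The plan is to show that $\mathcal{B}^t_{\lz_1}(\js_1)$ preserves the space $X = \{f \in H^{1,1}_t(\R_+),\ f_x \in H^{0,1}_t(\R_+)\}$ by treating the B\"acklund transformation in the $t$-variable exactly as \cite{DZ} treats it in the $x$-variable, using the $t$-part of the Lax system instead of the $x$-part. First I would set up the Volterra integral equation for the relevant column solution $\js_1 = (\mu,\nu)^\intercal$ of $\js_t = (-2i\lz_1^2\st + \widetilde{Q})\js$ on $\R_+$: writing $\js_1 = e^{-2i\lz_1^2 t \st}(w_1, w_2)^\intercal$ and imposing a decaying normalization as $t \to \infty$, one gets an integral equation whose kernel involves $e^{-4\Im(\lz_1^2)(\langle t\rangle - \tau)}$ together with the entries of $\widetilde{Q}(\tau,0,\lz_1)$. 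The hypothesis $\lz_1 \in \C\setminus(\R\cup i\R)$ is precisely what makes $\Im(\lz_1^2) \ne 0$; after possibly relabelling one chooses the column/normalization for which $\Im(\lz_1^2) < 0$, so Lemma~\ref{l:est} applies directly.

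The key steps, in order: (1) Using the bound $\|\widetilde{Q}(\cdot,0,\lz_1)\|_{L^1(\R_+)} \le c(\lz_1)\|u\|_X$ established just before the proposition, together with the two estimates of Lemma~\ref{l:est}, show by a contraction/Neumann-series argument that the integral equation for $\js_1$ has a unique solution with $\mu, \nu \in H^{1,1}_t(\R_+)$, and moreover $|\mu|^2 + |\nu|^2$ is bounded away from zero on $\R_+$ (this last point was already argued in the text from the fact that a zero of $\js_1$ would force $\js_1 \equiv 0$, contradicting the prescribed nonzero asymptotics). (2) Feed this regularity into the reconstruction formula \eqref{eq:u1}: since $u_1(t,0) = u(t,0) - 2\eta\,\mu\nu^\ast/(|\mu|^2+|\nu|^2)$ and $H^{1,1}_t(\R_+)$ is an algebra stable under division by a function bounded below, conclude $u_1(\cdot,0) \in H^{1,1}_t(\R_+)$. (3) For the derivative, use the second line of \eqref{eq:u1}, which expresses $(u_1)_x(t,0)$ in terms of $u_x(t,0)$, $u(t,0)$, and the same rational combinations of $\mu,\nu$; since $u_x(\cdot,0) \in H^{0,1}_t(\R_+)$ and $u(\cdot,0) \in H^{0,1}_t(\R_+) \subset H^{1,1}_t(\R_+)$, and the $\mu,\nu$-factors lie in $H^{1,1}_t(\R_+) \hookrightarrow L^\infty$, conclude $(u_1)_x(\cdot,0) \in H^{0,1}_t(\R_+)$. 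Hence $u_1(\cdot,0) \in X$. (4) Finally invoke the left inverse already exhibited in the text — the transformation $\mathcal{B}^t_{\lz_2}(\js_2)$ with $\lz_2 = \lz_1$, $\js_2 = (-k_1\nu^\ast, k_1\mu^\ast)^\intercal$ — which has the same analytic structure and therefore also maps $X \to X$ by the identical argument, giving surjectivity of $\mathcal{B}^t_{\lz_1}(\js_1)$ onto $X$.

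The main obstacle I anticipate is step (1): carefully justifying that the Volterra equation in $t$ is well posed in the weighted space $H^{1,1}_t(\R_+)$ and that differentiating it in $t$ stays controlled — this is where the two inequalities of Lemma~\ref{l:est} do the real work, and where one must be attentive to the interplay between the weight $\langle t\rangle$, the exponential factor $e^{-4\Im(\lz_1^2)(\langle t\rangle - \tau)}$, and the fact that only an $L^1$ (not weighted) bound on $\widetilde{Q}$ is available. Everything downstream — the algebra property of $H^{1,1}_t(\R_+)$, the lower bound on $|\mu|^2+|\nu|^2$, and the bookkeeping in \eqref{eq:u1} — is routine once the solvability and regularity of $\js_1$ are in hand, so I would state step (1) as the crux and refer to \cite{DZ} for the analogous $x$-estimates, spelling out only the modifications forced by replacing $U$, $\theta(t,x,\lz)$ with $V$, $2\lz^2 t$.
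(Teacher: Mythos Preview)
Your overall strategy---mirror the \cite{DZ} argument with $t$ in place of $x$---is the right one, and steps (2)--(3) are exactly how the paper finishes. But there is a genuine gap in step~(1): you treat $\js_1$ itself as the solution of a Volterra equation with a decaying normalization at $t\to\infty$, and conclude $\mu,\nu\in H^{1,1}_t(\R_+)$. This is not available. The vector $\js_1$ is \emph{given} data in the transformation $\mathcal{B}^t_{\lz_1}(\js_1)$; it is an arbitrary nonzero solution of $\js_t=(-2i\lz_1^2\st+\widetilde{Q})\js$, and a generic such solution has an exponentially \emph{growing} component as $t\to\infty$ (for $\Im(\lz_1^2)<0$, the factor $e^{2i\lz_1^2 t}$ blows up). In that case $\mu$ and $\nu$ are certainly not in $H^{1,1}_t(\R_+)$, and no ``relabelling'' or choice of normalization is at your disposal to change this. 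What survives is only that the \emph{ratio} $\mu\nu^\ast/(|\mu|^2+|\nu|^2)$ lies in $H^{1,1}_t(\R_+)$, because the exponential prefactors cancel in the projector.

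The fix---and this is what the paper does---is to construct a fundamental system $\{m,n\}$ of the $t$-equation with prescribed asymptotics $\hat m\to e_1$, $\hat n\to e_2$ as $t\to\infty$, establish $\hat m-e_1,\,\hat n-e_2\in H^{1,1}_t$ via the Volterra/Neumann machinery and Lemma~\ref{l:est}, and then write the given $\js_1=c_1 m+c_2 n$. A short case split on $c_2=0$ versus $c_2\neq 0$ then shows $(P[1])_{12}\in H^{1,1}_t(\R_+)$ directly, after which your steps (2)--(3) go through unchanged. The second solution $n$ is the missing ingredient: it requires a slightly different integral equation (mixed limits of integration, a cutoff $t_0$ to get the contraction), and this is where the bulk of the work lies.
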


\begin{proof}
Following the proof for the B\"acklund transformation with respect to \(x\),
see Proposition 4.7 in \cite{DZ}, we want to introduce a \(t\) dependent (Jost) function.
In that regard, we freeze the space variable \(x\), whereas we need it particularly
at \(x=0\). Then, given the limit behaviors \(|u(t,0)|\to 0\) and \(|u_x(t,0)|\to 0\) as
\(t \to \infty\), it is reasonable to assume that there exists a
\(2\times1\)-vector-valued solution \(m\) to the spectral problem
\begin{equation*}
  \js_t = (-2i\lz^2\st + \widetilde{Q})\js
\end{equation*}
admitting the asymptotic behavior \(m(t,0,\lz) \sim e_1 e^{-2i\lz^2 t}\)
as \(t\to\infty\). Then, we also define the normalized \(t\) dependent (Jost) function by
\begin{equation*}
  \hat{m}(t,0,\lz) = m(t,0,\lz) e^{2i\lz^2 t},
\end{equation*}
whereas it admits the normalization \(\lim_{t\to\infty} \hat{m}(t,0,\lz) = e_1\).
The solution \(m(t,0,\lz) = \hat{m}(t,0,\lz)e^{-2i\lz^2 t}\) is uniquely specified
by the asymptotic behavior \(\hat{m}(t,0,\lz)\to e_1\) as \(t\to\infty\).
The normalized (Jost) function is a solution to the following Volterra integral equation
\begin{equation}\label{eq:volmh}
  \hat{m}(t,0,\lz)=e_1-\int_{t}^{\infty}
  \begin{pmatrix}
    1 & 0 \\
    0 & e^{4i\lz^2 (t-\tau)}
  \end{pmatrix}
  \widetilde{Q}(\tau,0,\lz)\, \hat{m}(\tau,0,\lz) \id\tau.
\end{equation}
This, we will show by defining the operator
\begin{equation*}
  \mathcal{M}[\hat{m}](t,0,\lz)=-\int_{t}^{\infty}
  \begin{pmatrix}
    1 & 0 \\
    0 & e^{4i\lz^2 (t-\tau)}
  \end{pmatrix}
  \widetilde{Q}(\tau,0,\lz)\, \hat{m}(\tau,\lz) \id\tau,
\end{equation*}
which is a bounded operator mapping from \(L^\infty(\R_+)\) to \(L^\infty(\R_+)\)
for any fixed \(\lz\) such that \(\Im(\lz^2)<0\), since \(t-\tau\leq0\).
Also, we define
\begin{equation*}
  \mathcal{M}_j[\hat{m}](t,0,\lz)=-\int_{t}^{t_{j-1}}
  \begin{pmatrix}
    1 & 0 \\
    0 & e^{4i\lz^2 (t-\tau)}
  \end{pmatrix}
  \widetilde{Q}(\tau,0,\lz)\, \hat{m}(\tau,0,\lz) \id\tau,
\end{equation*}
where we fix \(\lz\) such that \(\Im(\lz^2)=0\). For an arbitrary interval
\((t_{j-1},t_j) \subset \R_+\), we obtain the estimate
\begin{equation*}
  ||\mathcal{M}_j[\hat{m}](\cdot,0,\lz)||_{L^\infty(t_{j-1},t_j)}\leq
  ||\widetilde{Q}(\cdot,0,\lz)||_{_{L^1(t_{j-1},t_j)}}
  ||\hat{m}(\cdot,0,\lz)||_{L^\infty(t_{j-1},t_j)}.
\end{equation*}
Then, we can choose \(t_j\) in such a way that the operator \(\mathcal{M}_j\) is a
contraction from \(L^\infty(t_{j-1},t_j)\) to \(L^\infty(t_{j-1},t_j)\). Repeating
this argument starting from \(t_0=0\) and appropriately chosen \(t_1\), \dots,
to \(t_{\ell-1}\) and \(t_\ell = \infty\), we can obtain finitely many intervals
such that \(\mathcal{M}_j\) is contraction from \(L^\infty(t_{j-1},t_j)\) to
\(L^\infty(t_{j-1},t_j)\), \(j=1,\dots,\ell\). Setting \(\hat{m}_0(t,0,\lz)\equiv e_1\)
on \((t_0,t_1)\), we can find a function \(\hat{m}_j(\cdot,0,\lz)\in
L^\infty(t_{j-1},t_j)\) by the Banach Fixed Point Theorem such that
it solves the equation
\begin{equation*}
    \hat{m}_j(t,0,\lz)= \hat{m}_{j-1}(t_j,0,\lz)+
    \mathcal{M}_j[\hat{m}_j](t,0,\lz), \qquad t\in(t_{j-1},t_j)
\end{equation*}
for every \(j=2,\dots,\ell\). Combining these functions, we find a continuous function
in \(L^\infty(\R_+)\) satisfying the Volterra integral equation \eqref{eq:volmh},
which covers the existence of \(\hat{m}(t,0,\lz)\). Having two solutions
\(\hat{m}(t,0,\lz)\) and \(\tilde{m}(t,0,\lz)\) to the Volterra integral equation
\eqref{eq:volmh}, we can deduce
\begin{equation*}
  |\hat{m}(t,0,\lz)-\tilde{m}(t,0,\lz)|\leq
  \int_{t}^{\infty} |\widetilde{Q}(\tau,0,\lz)|\, |\tilde{m}(\tau,0,\lz)- \hat{m}(\tau,0,\lz)| \id\tau.
\end{equation*}
Then by Gr{\"o}nwall's lemma, we obtain uniqueness.

Now, for the claims regarding the continuation of \(\hat{m}(t,0,\lz)\)
to \(\Im(\lz^2)\leq0\). Analogously to the \(x\) dependent Jost solution
\(\mjs_-^{(1)}(t,x,\lz)\), we introduce for \(\hat{m}(t,0,\lz)\)
the Neumann series \(\sum_{j=0}^{\infty}
\mathcal{M}^j[m_0](t,0,\lz)\), where \(m_0(t,0,\lz)\equiv e_1\), which is
formally a  solution to the Volterra integral equation \eqref{eq:volmh}.
Then, it is possible to derive a bound of the iterated operator \(\mathcal{M}\).
We define \(T(t,\lz)\) by
\begin{equation*}
  T(t,\lz)=\int_{t}^{\infty}|\widetilde{Q}(\tau,0,\lz)|\id\tau
  \leq\int_{0}^{\infty}|\widetilde{Q}(\tau,0,\lz)|\id\tau\leq
  \|\widetilde{Q}(\cdot,0,\lz)\|_{L^1(\R_+)}.
\end{equation*}
And then show that by induction, we have
\begin{align*}
  |\mathcal{M}^{j+1}[\hat{m}](t,0,\lz)| & \leq
  c \frac{\|\hat{m}(\cdot,0,\lz)\|_{L^\infty(\R_+)}}{j!}
  \int_{t}^{\infty} |\widetilde{Q}(\tau,0,\lz)|\, (T(\tau,\lz))^j \id\tau\\
  &\leq c \frac{\|\hat{m}(\cdot,0,\lz)\|_{L^\infty(\R_+)}}{j!}
  \int_{0}^{T(t,\lz)} s^j \id s\\
  &= c \|\hat{m}(\cdot,0,\lz)\|_{L^\infty(\R_+)}
  \frac{(T(t,\lz))^{j+1}}{(j+1)!},
\end{align*}
where we put \(s=T(\tau,\lz)\). Thus, we have that \(\sum_{j=0}^{\infty}
\mathcal{M}^j[m_0](t,0,\lz)\) is majorized in norm by a uniformly convergent
power series and is therefore itself uniformly convergent for \(\Im(\lz)\leq0\).
The analyticity and continuity continuation for \(\hat{m}(t,0,\lz)\) respectively
in \(\{\lz\in\C\setminus\{0\}\colon \Im(\lz^2)\leq0\}\) and in
\(\{\lz\in\C\setminus\{0\}\colon \Im(\lz^2)<0\}\) holds also, as before,
for the function \(m(t,0,\lz)\). It is left, to show that the entries of
\(\hat{m}(\cdot,0,\lz)-e_1\) are in \(H^{1,1}_t(\R_+)\). Since \(\mathcal{M}\) maps
\(L^\infty(\R_+)\) to \(L^\infty(\R_+)\) and writing \(\hat{m}(t,0,\lz)=
(\hat{m}_1, \hat{m}_2)\), we can estimate using Lemma \ref{l:est},
\begin{align*}
  \|\hat{m}_2(\cdot,0,\lz)\|_{H^{1,1}_t(\R_+)}&\leq
  c\|(\widetilde{Q}_{21}\hat{m}_1)(\cdot,0,\lz)\|_{H^{0,1}_t(\R_+)}+
  c\|(\widetilde{Q}_{22}\hat{m}_2)(\cdot,0,\lz)\|_{H^{0,1}_t(\R_+)}\\
  &\leq\|\hat{m}_1(\cdot,0,\lz)\|_{L^\infty(\R_+)}
  \|\widetilde{Q}_{21}(\cdot,0,\lz)\|_{H^{0,1}_t(\R_+)}\\&\quad
  +\|\hat{m}_2(\cdot,0,\lz)\|_{L^\infty(\R_+)}
  \|\widetilde{Q}_{22}(\cdot,0,\lz)\|_{H^{0,1}_t(\R_+)}
\end{align*}
and
\begin{align*}
  \|\hat{m}_1(\cdot,0,\lz)-1\|_{H^{1,1}_t(\R_+)}&\leq
  c\|(\widetilde{Q}_{11}\hat{m}_1)(\cdot,0,\lz)\|_{H^{1,1}_t(\R_+)}+
  c\|(\widetilde{Q}_{12}\hat{m}_2)(\cdot,0,\lz)\|_{H^{1,1}_t(\R_+)}\\
  &\leq\|\hat{m}_1(\cdot,0,\lz)\|_{L^\infty(\R_+)}
  \|\widetilde{Q}_{11}(\cdot,0,\lz)\|_{H^{1,1}_t(\R_+)}\\&\quad
  +\|\hat{m}_2(\cdot,0,\lz)\|_{H^{1,1}_t(\R_+)}
  \|\widetilde{Q}_{12}(\cdot,0,\lz)\|_{H^{0,1}_t(\R_+)}.
\end{align*}
And for the entries of \(\widetilde{Q}(t,0,\lz)\), we find
\begin{equation}\label{eq:wQest}
\begin{aligned}
  \|\widetilde{Q}_{11}(\cdot,0,\lz)\|_{H^{1,1}_t(\R_+)}&\leq
  \|u(\cdot,0)\|_{L^\infty(\R_+)}\|u(\cdot,0)\|_{H^{1,1}_t(\R_+)}, \\
  \|\widetilde{Q}_{12}(\cdot,0,\lz)\|_{H^{0,1}_t(\R_+)}&\leq
  2|\lz|\|u(\cdot,0)\|_{H^{0,1}_t(\R_+)}+\|u_x(\cdot,0)\|_{H^{0,1}_t(\R_+)}, \\
  \|\widetilde{Q}_{21}(\cdot,0,\lz)\|_{H^{0,1}_t(\R_+)}&\leq
  2|\lz|\|u(\cdot,0)\|_{H^{0,1}_t(\R_+)}+\|u_x(\cdot,0)\|_{H^{0,1}_t(\R_+)}, \\
  \|\widetilde{Q}_{22}(\cdot,0,\lz)\|_{H^{0,1}_t(\R_+)}&\leq
  \|u(\cdot,0)\|_{L^\infty(\R_+)}\|u(\cdot,0)\|_{H^{0,1}_t(\R_+)}.
\end{aligned}
\end{equation}
Thereby, if \(u(\cdot,0)\in X\), then \(\hat{m}(\cdot,0,\lz)-e_1\in H^{1,1}_t(\R_+)\).
Next, we consider a solution \(n(t,0,\lz)\) of the \(t\) part of the Lax pair
defined on \(\Im(\lz^2)\leq0\) and \(t\in\R_+\) with the property
\begin{align*}
  n(t,0,\lz) &= (e_2 + r_1(t))e^{2i\lz^2 t},
  \quad r_1 \in H^{1,1}_t(\R_+).
\end{align*}
Here, \(e_2=(0,1)^\intercal\) and \(n(t,0,\lz)\) is a non-unique solution to
the differential equation defined on the same domain as \(m(t,0,\lz)\) and,
as we will show, is linear independent of \(m(t,0,\lz)\) for all \(t\in\R_+\).

For given \(u(\cdot,x) \in X\) and
\(\lz\in\{\lz\in\C\setminus\{0\}\colon \Im(\lz^2)\leq0\}\), we fix \(t_0>0\)
such that each entry of \(\int_{t_0}^{\infty} |\widetilde{Q}(\tau,0,\lz)|
\id\tau\) is bounded by a constant \(c_{t_0} < 1\) and by the arbitrary choice of \(t_0\),
the non-uniqueness is apparent. For \(\Im(\lz^2)\leq0\), we consider the following
integral equation for \(n(t,0,\lz)\),
\begin{align*}
  n(t,0,\lz) & = e^{2i\lz^2 t} e_2 + \int_{t_0}^{t}
  \begin{pmatrix}
    e^{-2i\lz^2 (t-\tau)} & 0 \\
    0 & 0
  \end{pmatrix} \widetilde{Q}(\tau,0,\lz)\, n(\tau,0,\lz) \id\tau\\
  & \qquad - \int_{t}^{\infty}
  \begin{pmatrix}
    0 & 0 \\
    0 & e^{2i\lz^2 (t-\tau)}
  \end{pmatrix} \widetilde{Q}(\tau,0,\lz)\, n(\tau,0,\lz) \id\tau, \quad
  t\geq t_0.
\end{align*}
Set \(\hat{n}(t,0,\lz)=n(t,0,\lz)e^{-2i\lz^2 t}\),
then the integral equation becomes
\begin{equation}\label{eq:vol2}
  \hat{n}(t,0,\lz) = e_2 + (\mathcal{N}\hat{n})(t,0,\lz), \quad t\geq t_0,
\end{equation}
where \(\mathcal{N}\) is an integral operator defined by
\begin{align*}
  (\mathcal{N}\hat{n})(t,0,\lz)=&\int_{t_0}^{t}
  \begin{pmatrix}
    e^{-4i\lz^2 (t-\tau)} & 0 \\
    0 & 0
  \end{pmatrix} \widetilde{Q}(\tau,0,\lz)\, \hat{n}(\tau,0,\lz) \id\tau\\
  &- \int_{t}^{\infty}
  \begin{pmatrix}
    0 & 0 \\
    0 & 1
  \end{pmatrix} \widetilde{Q}(\tau,0,\lz)\, \hat{n}(\tau,0,\lz) \id\tau,\quad
  \hat{n}(\cdot,0,\lz) \in L^\infty[t_0,\infty).
\end{align*}
By the same argument as for \(\hat{m}(t,0,\lz)\), we have existence of
\(\hat{n}(t,0,\lz)\) for \(t\in (t_0,\infty)\). As \(\Im(\lz^2)\leq0\)
and each entry of \(\widetilde{Q}(\cdot,0,\lz)\) being in \(L^1[t_0,\infty)\),
\(\mathcal{N}\) is a bounded operator from \(L^\infty[t_0,\infty)\) to
\(L^\infty[t_0,\infty)\). Similar to before, put \(\hat{n}_0(t,0,\lz)= e_2\) and define
\(\hat{n}_{j+1}(t,0,\lz)= e_2+ (\mathcal{N}\hat{n}_j)(t,0,\lz)\), inductively.
Then,
\begin{equation*}
  \|(\hat{n}_{j+1}-\hat{n}_j)(\cdot,0,\lz)\|_{L^\infty[t_0,\infty)}
  \leq c_{t_0}^j, \quad j\geq0.
\end{equation*}
Indeed \(\|\hat{n}_1(\cdot,0,\lz)-\hat{n}_0(\cdot,0,\lz)\|_{L^\infty[t_0,\infty)}
\leq c_{t_0}\) and for \(j\geq1\),
\begin{align*}
  \|(\hat{n}_{j+1}-\hat{n}_j)(\cdot,0,\lz)\|_{L^\infty[t_0,\infty)}
  &=\|(\mathcal{N}(\hat{n}_j-\hat{n}_{j-1}))(\cdot,0,\lz)\|_{L^\infty[t_0,\infty)}\\
  & \leq \|(\hat{n}_j-\hat{n}_{j-1})(\cdot,0,\lz)\|_{
  L^\infty[t_0,\infty)} \int_{t_0}^{\infty} |\widetilde{Q}(\tau,0,\lz)|\id\tau\\
  &=c_{t_0} \|(\hat{n}_j-\hat{n}_{j-1})(\cdot,0,\lz)\|_{L^\infty[t_0,\infty)}
\end{align*}
Therefore, \(\hat{n}(t,0,\lz)=\hat{n}_0(t,0,\lz)+\sum_{j=1}^{\infty}
\hat{n}_j(t,0,\lz)-\hat{n}_{j-1}(t,0,\lz)\) converges in
\(L^\infty[t_0,\infty)\) and solves the integral equation \eqref{eq:vol2}.
Writing \(\hat{n}(t,0,\lz)= (\hat{n}_1,\hat{n}_2)^\intercal\), \eqref{eq:vol2}
becomes
\begin{align*}
  \hat{n}_1(t,0,\lz) & = \int_{t_0}^{t} e^{-4i\lz^2 (t-\tau)}
  (\widetilde{Q}_{11}(\tau,0,\lz)\hat{n}_1(\tau,0,\lz) +
  \widetilde{Q}_{12}(\tau,0,\lz)\hat{n}_2(\tau,0,\lz)) \id\tau\\
  \hat{n}_2(t,0,\lz) & =1 - \int_{t}^{\infty}
  \widetilde{Q}_{21}(\tau,0,\lz)\hat{n}_1(\tau,0,\lz) +
  \widetilde{Q}_{22}(\tau,0,\lz)\hat{n}_2(\tau,0,\lz) \id\tau
\end{align*}
As for \(\hat{m}(t,0,\lz)\), we can prove, if \(u(\cdot,0)\in X\), then
\(\hat{n}_1(\cdot,0,\lz)\in H^{1,1}_t[t_0,\infty)\). Therefore,
we consider with Lemma \ref{l:est} the estimate
\begin{align*}
  \|\hat{n}_1(\cdot,0,\lz)\|_{H^{1,1}_t[t_0,\infty)}&\leq
  c\|(\widetilde{Q}_{11}\hat{n}_1)(\cdot,0,\lz)\|_{H^{0,1}_t[t_0,\infty)}+
  c\|(\widetilde{Q}_{12}\hat{n}_2)(\cdot,0,\lz)\|_{H^{0,1}_t[t_0,\infty)}\\
  &\leq\|\hat{n}_1(\cdot,0,\lz)\|_{L^\infty[t_0,\infty)}
  \|\widetilde{Q}_{11}(\cdot,0,\lz)\|_{H^{0,1}_t(\R_+)}\\&\quad
  +\|\hat{n}_2(\cdot,0,\lz)\|_{L^\infty[t_0,\infty)}
  \|\widetilde{Q}_{12}(\cdot,0,\lz)\|_{H^{0,1}_t(\R_+)}.
\intertext{A similar reasoning involving Lemma \ref{l:est} implies
that \(\hat{n}_2(\cdot,0,\lz)-1\in H^{1,1}_t[t_0,\infty)\). We have}
  \|\hat{n}_2(\cdot,0,\lz)-1\|_{H^{1,1}_t[t_0,\infty)}&\leq
  c\|(\widetilde{Q}_{21}\hat{n}_1)(\cdot,0,\lz)\|_{H^{1,1}_t[t_0,\infty)}+
  c\|(\widetilde{Q}_{22}\hat{n}_2)(\cdot,0,\lz)\|_{H^{1,1}_t[t_0,\infty)}\\
  &\leq\|\hat{n}_1(\cdot,0,\lz)\|_{H^{1,1}_t[t_0,\infty)}
  \|\widetilde{Q}_{21}(\cdot,0,\lz)\|_{H^{0,1}_t(\R_+)}\\&\quad
  +\|\hat{n}_2(\cdot,0,\lz)\|_{L^\infty[t_0,\infty)}
  \|\widetilde{Q}_{22}(\cdot,0,\lz)\|_{H^{1,1}_t(\R_+)}.
\end{align*}
Except for
\begin{equation*}
  \|\widetilde{Q}_{22}(\cdot,0,\lz)\|_{H^{1,1}_t(\R_+)}\leq
  \|u(\cdot,0)\|_{L^\infty(\R_+)}\|u(\cdot,0)\|_{H^{1,1}_t(\R_+)},
\end{equation*}
all estimates on the entries of \(\widetilde{Q}(t,0,\lz)\) are already done
in \eqref{eq:wQest}. Therefore, we indeed have that \(\hat{n}(\cdot,0,\lz)-e_2
\in H^{1,1}_t[t_0,\infty)\) if \(u(\cdot,0) \in X\).
We know that \(n(t,0,\lz)\) defined through \(\hat{n}(t,0,\lz)\) solves
the integral equation \eqref{eq:volmh} for \(t\in\R_+\) and we have its existence
in \(t\geq t_0\), it follows that, given \(t_0\), \(n(t,0,\lz)\) can be
uniquely extended to a solution of the \(t\) part of the Lax pair for
\(\Im(\lz^2)<0\).

The linear independence of \(m(t,0,\lz_1)\) and \(n(t,0,\lz_1)\),
\(\lz_1\in\{\lz\in\C\colon \Im(\lz^2)<0\}\), can be shown by
\begin{equation*}
  \lim_{t\to\infty}\det(m(t,0,\lz_1),n(t,0,\lz_1))=1.
\end{equation*}
Since \(V\) has zero trace, we conclude that
\begin{equation*}
  \det(m(t,0,\lz_1),n(t,0,\lz_1))=1, \quad t\geq0.
\end{equation*}
Then, for \(t\geq0\), we can write \(\js_1\) as a linear combination
of \(m(t,0,\lz_1)\) and \(n(t,0,\lz_1)\) such that
\begin{equation*}
  \js_1(t) = c_1 m(t,0,\lz_1) + c_2 n(t,0,\lz_1)
\end{equation*}
for some constants \(c_1\), \(c_2\). If \(c_2=0\), then as \(t\to\infty\),
\begin{equation*}
  \psi_0(t) = c_1 e^{-2i\lz_1^2 t}
  \begin{pmatrix}
    1+r_2(t) \\
    r_3(t)
  \end{pmatrix},\quad r_j\in H^{1,1}_t(\R_+),\quad j=2,3.
\end{equation*}
Hence,
\begin{equation*}
  (P[1])_{12} = \frac{(1+r_2(t)) r_3(t)^\ast}{
  |1+r_2(t)|^2+|r_3(t)|^2}\in H^{1,1}_t(\R_+)
\end{equation*}
As in the argumentation for the B\"acklund matrix being a map from \(u(\cdot,0)\in\LI
\to \LI\ni u_1(\cdot,0)\), the denominator \(|1+r_2(t)|^2+|r_3(t)|^2\) can not be zero,
due to \(m(t,0,\lz)\) being a solution to the spectral problem \(\js_t = (-2i\lz^2\st
+ \widetilde{Q})\js\) and given its asymptotic behavior as \(t\) goes to infinty.
If \(c_2\neq0\), then as \(t\to\infty\),
\begin{equation*}
  \psi_1(t) = c_2 e^{2i\lz_1^2 t}
  \begin{pmatrix}
    r_4(t) \\
    1+r_5(t)
  \end{pmatrix},\quad r_j\in H^{1,1}_t(\R_+),\quad j=4,5.
\end{equation*}
The same reasoning makes sure that the denominator can not be zero and hence,
\begin{equation*}
  (P[1])_{12} = \frac{(1+r_5(t))^\ast r_4(t)}{
  |1+r_4(t)|^2+|r_5(t)|^2}\in H^{1,1}_t(\R_+)
\end{equation*}
Thus,
\begin{equation*}
  u_1=u+(P[1])_{12}\in H^{1,1}_t(\R_+).
\end{equation*}
By the second line of equation \eqref{eq:u1}, it can also be shown that
\((u_1)_x(\cdot,0)\in H^{0,1}_t(\R_+)\) in both cases,
which implies \(u_1(\cdot,0)\in X\). For \(\Im(\lz^2)\geq0\),
the choice of the normalization of \(m(t,0,\lz)\) and \(n(t,0,\lz)\) is reversed.
\end{proof}
\begin{lem}\label{l:diag}
Let \(u(\cdot,0)\in X\), and \(D[1]\) be a dressing matrix constructed by
\(\lz_1=\xi+i\eta\) and \(\js_1=(\mu,\nu)^\intercal\)
evaluated at \(x=0\). Then, \(D[1]\big\rvert_{x=0}\) goes to either
\(\diag(\lz-\lz_1^\ast, \lz-\lz_1)\) or
\(\diag(\lz-\lz_1,\lz-\lz_1^\ast)\) as \(t\to\infty\),
depending on the limit behavior of \(\psi_1\).
\end{lem}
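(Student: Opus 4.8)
The plan is to reuse the decomposition of \(\js_1\) in terms of the linearly independent solutions \(m(t,0,\lz_1)\) and \(n(t,0,\lz_1)\) obtained in the proof of Proposition~\ref{p:bij}, and then to pass to the limit \(t\to\infty\) in the explicit entries of \(D[1]\big\rvert_{x=0}\). Assume first \(\Im(\lz_1^2)<0\). Then for \(t\geq0\) one may write \(\js_1(t)=c_1 m(t,0,\lz_1)+c_2 n(t,0,\lz_1)\), and as \(t\to\infty\) exactly one of two cases occurs: if \(c_2=0\) then \(\js_1(t)=c_1 e^{-2i\lz_1^2 t}\,(1+r_2(t),\,r_3(t))^\intercal\), whereas if \(c_2\neq0\) then \(\js_1(t)=c_2 e^{2i\lz_1^2 t}\,(r_4(t),\,1+r_5(t))^\intercal\), with all remainders \(r_j\in H^{1,1}_t(\R_+)\). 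I would begin by recording that every \(f\in H^{1,1}_t(\R_+)\) lies in \(H^1(\R_+)\), hence is continuous and tends to \(0\) at infinity, so \(r_j(t)\to0\) as \(t\to\infty\).

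Writing \(D[1]\big\rvert_{x=0}\) in the form displayed just before the statement,
\[
  D[1]\big\rvert_{x=0}=\lz\I+
  \begin{pmatrix}
    -\xi-i\eta\frac{|\mu|^2-|\nu|^2}{|\mu|^2+|\nu|^2} &
    \frac{-2i\eta\mu\nu^\ast}{|\mu|^2+|\nu|^2}\\[4pt]
    \frac{-2i\eta\mu^\ast\nu}{|\mu|^2+|\nu|^2} &
    -\xi+i\eta\frac{|\mu|^2-|\nu|^2}{|\mu|^2+|\nu|^2}
  \end{pmatrix},
\]
where \(\mu,\nu\) are the components of \(\js_1\) and the denominator is nonvanishing as already observed in the proof of Proposition~\ref{p:bij}. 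In the case \(c_2=0\), inserting \(\mu=c_1 e^{-2i\lz_1^2 t}(1+r_2)\) and \(\nu=c_1 e^{-2i\lz_1^2 t}r_3\), the positive common factor \(|c_1|^2\,|e^{-2i\lz_1^2 t}|^2\) cancels in every quotient, leaving \(\frac{|\mu|^2-|\nu|^2}{|\mu|^2+|\nu|^2}=\frac{|1+r_2|^2-|r_3|^2}{|1+r_2|^2+|r_3|^2}\) and \(\frac{\mu\nu^\ast}{|\mu|^2+|\nu|^2}=\frac{(1+r_2)r_3^\ast}{|1+r_2|^2+|r_3|^2}\). Letting \(t\to\infty\) and using \(r_2,r_3\to0\), the first quotient tends to \(1\) and the second to \(0\), so \(D[1]\big\rvert_{x=0}\to\lz\I+\diag(-\xi-i\eta,\,-\xi+i\eta)=\diag(\lz-\lz_1,\,\lz-\lz_1^\ast)\). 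In the case \(c_2\neq0\) the same computation with \(\mu=c_2 e^{2i\lz_1^2 t}r_4\) and \(\nu=c_2 e^{2i\lz_1^2 t}(1+r_5)\) yields \(\frac{|\mu|^2-|\nu|^2}{|\mu|^2+|\nu|^2}\to-1\) and off-diagonal quotient \(\to0\), whence \(D[1]\big\rvert_{x=0}\to\diag(\lz-\lz_1^\ast,\,\lz-\lz_1)\). This is precisely the asserted dichotomy, the alternative being governed by whether \(\js_1\) asymptotically aligns with \(e_1\) or \(e_2\), i.e.\ by whether \(c_2=0\); and the case \(\Im(\lz_1^2)\geq0\) is obtained by interchanging the roles of \(m\) and \(n\) (and of the two exponentials), exactly as at the end of the proof of Proposition~\ref{p:bij}.

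I do not expect a genuine obstacle: the argument is just a limit computation on rational expressions in \(\mu\) and \(\nu\) after the exponential prefactors cancel. The only steps needing a little care are the decay \(r_j(t)\to0\) (where the \(H^1\) content of \(H^{1,1}_t(\R_+)\) is used) and the bookkeeping of the sign in front of \(i\eta\) in the diagonal, since that sign is exactly what distinguishes the two possible limiting diagonal matrices.
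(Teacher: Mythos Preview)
Your proof is correct and follows essentially the same route as the paper: you use the decomposition \(\js_1=c_1 m+c_2 n\) from Proposition~\ref{p:bij}, split into the cases \(c_2=0\) and \(c_2\neq0\), and pass to the limit in the explicit entries of \(D[1]\big\rvert_{x=0}\). The paper argues slightly more tersely via \(\nu/\mu\to0\) (resp.\ \(\mu/\nu\to0\)), but this is exactly your computation; your added remark that \(r_j\in H^{1,1}_t(\R_+)\subset H^1(\R_+)\) forces \(r_j(t)\to0\) makes explicit a step the paper leaves implicit.
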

\begin{proof}
At \(t=0\) and \(x=0\), \(\js_1\) is either being produced by
\((1,c)^\intercal\), \(c\in\C\), or \((0,1)^\intercal\).
In the first case, \(\js_1 = c_1 m(t,0,\lz_1)+ c_2 n(t,0,\lz_1)\)
for some constants \(c_1\), \(c_2\), where \(m(t,0,\lz)\) and \(n(t,0,\lz)\)
are the linear independent solutions of the \(t\) part of the Lax system
as constructed in the proof of Proposition \ref{p:bij}.
If \(\js_1\) is proportional to \(m(t,0,\lz_1)\), then necessarily \(c_2=0\).
As a consequence \(\frac{\nu}{\mu} = \frac{m_2(t,0,\lz_1)}{
m_1(t,0,\lz_1)}\to 0\) as \(t\to\infty\) and so
\(D[1]\big\rvert_{x=0}\) goes to \(\diag(\lz-\lz_1,\lz-\lz_1^\ast)\)
as \(t\to\infty\). If \(c_2\neq0\), then, as \(t\to\infty\),
\(\js_1 = c_2 e^{2i\lz_1^2 t}
\begin{pmatrix}
  r_4(t) \\
  1+r_5(t)
\end{pmatrix}\), where \(r_4,r_5\in H^{1,1}_t(\R_+)\) as before. Therefore,
\(\frac{\mu}{\nu} \to 0\) as \(t\to\infty\) and so
\(D[1]\big\rvert_{x=0}\) goes to \(\diag(\lz-\lz_1^\ast,\lz-\lz_1)\) as \(t\to\infty\).
In the second case, we necessarily have \(c_1=0\) and again by
\(n(t,0,\lz_1)\), we have that \(D[1]\big\rvert_{x=0}\) goes to \(\diag(\lz-\lz_1^\ast,
\lz-\lz_1)\) as \(t\to\infty\).
\end{proof} 
\section{Dressing on two half-lines}\label{sc:wl}
In this section, we want to use the introduced Darboux transformation to construct soliton solutions for a model PDE on two half-lines which are connected via
boundary conditions established through the localized B\"acklund transformation.
Based on the ideas of \cite{Zh}, we will incorporate this boundary condition into
the dressing process. In this way, we also show that this method can be,
without further complications, generalized to a simple graph structure and
to a time-dependent gauge transformation.
\subsection{NLS equation with defect conditions}
For the convenience of the reader, we will explicitly state the model
as a generalization of the NLS equation \eqref{eq:NLS} to the NLS equation
on two half-lines
\begin{align}\label{eq:NLShl1}
\begin{split}
  &i u_t + u_{xx}+ 2 |u|^2u = 0,\\
  &u(0,x) = u_0(x)
\end{split}
  \intertext{for \(u(t,x)\colon \R_+ \times \R_+ \mapsto \C\) and initial condition \(u_0(x)\) for \(x\in \R_+\) and}
\begin{split}\label{eq:NLShl2}
  &i \tilde{u}_t + \tilde{u}_{xx} + 2|\tilde{u}|^2\tilde{u} = 0,\\
  &\tilde{u}(0,x) = \tilde{u}_0(x)
\end{split}
\end{align}
for \(\tilde{u}(t,x)\colon \R_+ \times \R_- \mapsto \C\) and initial condition
\(\tilde{u}_0(x)\) for \(x\in \R_-\). In that context, taking for example
\(u(t,0) = \tilde{u}(t,0)\) and \(u_x(t,0) = \tilde{u}_x(t,0)\) as boundary conditions,
the two half-lines are connected such that there is no reflection and trivial
transmission and by redefining the initial condition accordingly, we end up with
the NLS equation as in \eqref{eq:NLS}. However, the model we are interested in
arises with so-called defect conditions
\begin{align}\label{eq:def}
\begin{split}
  &(\tilde{u} - u)_x = i \alpha (\tilde{u} - u) \pm \Omega(\tilde{u} + u),\\
  &(\tilde{u} - u)_t = -\alpha (\tilde{u} - u)_x \pm i \Omega (\tilde{u} + u)_x
  +i(\tilde{u} - u)(|u|^2+|\tilde{u}|^2) \qquad
\end{split}
\end{align}
at \(x=0\). In particular, we have \(\Omega=\sqrt{\beta^2 - |\tilde{u}-u|^2}\)
and defect parameter \(\alpha, \beta \in \R\). It can be shown that the defect
conditions \eqref{eq:def} are equivalent to the general localized B\"acklund
transformation \eqref{eq:locdef} with \(u\), \(\tilde{u}\), \(\alpha\), \(\beta\)
and the respective sign as we have already matched. Since we want to incorporate
the defect into the dressing method, we define the localized defect matrix
corresponding to seed solutions \(u[0](t,x)\) and \(\tilde{u}[0](t,x)\)
to the NLS equations on the respective half-line \(\R_+\times\R_-\)
and \(\R_+\times\R_+\) and two spectral parameter \(\alpha\in\R\) and
\(\beta\in\R\setminus\{0\}\) such that
\begin{equation}\label{eq:G0}
G_0(t,0,\lz)= 2\lz\I +
\begin{pmatrix}
  \alpha\pm i\sqrt{\beta^2-|\tilde{u}[0]-u[0]|^2} & -i(\tilde{u}[0]-u[0]) \\
  -i(\tilde{u}[0]-u[0])^\ast & \alpha\mp i\sqrt{\beta^2-|\tilde{u}[0]-u[0]|^2}
\end{pmatrix}
\end{equation}
satisfies \eqref{eq:locdef}. Here, the sign in the \((11)\)-entry and
accordingly the \((22)\)-entry can still be chosen freely. The goal
in this section is to successfully apply the properties we have developed
in order to dress the solutions \(u[0](t,x)\) and \(\tilde{u}[0](t,x)\) in
such a way that we are able to find a localized defect matrix preserving the
boundary constraint. In that regard, we talk about a similar form if
the boundary constraint for the dressed solutions holds with the same
spectral parameter \(\alpha\) and \(\beta\) as well as the same sign
in front of the root of the \((11)\)-entry. It is an important
step in dressing the boundary to handle the sign in the entries
of the matrix for the dressed localized defect matrix. Especially,
we will use Proposition \ref{p:bij} and Lemma \ref{l:diag} to ensure that
the signs match, when the solution is in a particular function space.

\begin{remark}
The connection of the defect conditions \eqref{eq:def} to the boundary
constraint \eqref{eq:locdef} has been discussed, among other publications,
in \cite{Ca2} and \cite{CoZa}. Therein, the authors additionally prove the
existence of an infinite set of modified conservation laws, which means that
the defect conditions are indeed integrable boundary conditions.
\end{remark}
The defect conditions being integrable also establishes the possibility to apply
the Darboux transformation to this model in order to produce soliton solutions.
\subsection{Dressing the boundary}
The results of this section are inspired by \cite{Zh}. However, due to the
differences in the model, we want to apply the dressing the boundary to,
the implementation of the method differs from the original approach.
In particular, the pairing of zeros \(\lz_1\) and \(-\lz_1\) in order
to respect the relation of \(V(t,0,\lz)\) and \(V(t,0,-\lz)\) is not
required in the presented model. There will be, nonetheless, a connection
of the zeros \(\lz_1\) introduced for the positive half-line and the ones
\(\tilde{\lz}_1\) introduced for the negative half-line. With that in mind,
we present the main result of this paper.
\begin{prop}\label{p:Nsol}
Consider solutions \(u[0]\) and \(\tilde{u}[0]\) to the NLS equation
\eqref{eq:NLShl1} and \eqref{eq:NLShl2}, which at \(x=0\) are in the
function space \(X\). In addition, let \(u[0]\) and \(\tilde{u}[0]\)
be subject to the defect condition \eqref{eq:def} with parameter
\(\alpha\in\R\) and \(\beta\in\R\setminus\{0\}\). Further, take
\(N+1\) solutions \(\js_j\), \(j=0,\dots,N\), of the undressed Lax
system corresponding to \(u[0]\) for \(\lz=\lz_0=-\frac{\alpha+i\beta}{2}\)
and distinct \(\lz =\lz_j\in \C\setminus\bigl(\R\cup\{\lz_0,\lz_0^\ast\}
\bigr)\), \(j=1,\dots,N\). Constructing \(G_0\) of localized defect form
as in \eqref{eq:G0} with \(u[0]\), \(\tilde{u}[0]\), \(\alpha\), \(\beta\) and
chosen sign, we assume that there exist paired solutions \(\widetilde{\js}_j\) of
the undressed Lax system corresponding to \(\tilde{u}[0]\) for \(\lz= \lz_j\),
\(j=1,\dots,N\), satisfying
\begin{equation}\label{eq:gpsi}
\widetilde{\js}_j\big\rvert_{x=0} =
G_0(t,0,\lz_j) \js_j\big\rvert_{x=0},\quad j=1,\dots,N.
\end{equation}
Then, two \(N\)-fold Darboux transformations \(D[N]\), \(\widetilde{D}[N]\)
using the corresponding solutions lead to solutions \(u[N]\) and \(\tilde{u}[N]\) to the NLS equations \eqref{eq:NLShl1} and \eqref{eq:NLShl2}.
In particular, the defect conditions are preserved.
\end{prop}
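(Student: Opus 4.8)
The plan is to set up a ``commuting diagram'' of Darboux and localized-defect transformations and verify that it closes. I would introduce a second localized defect matrix $G_N(t,0,\lz)$, defined so that it relates the dressed Lax systems of $u[N]$ and $\tilde u[N]$, and then show that $G_N$ again has the general form \eqref{eq:genG} with the \emph{same} parameters $\alpha,\beta$ and the \emph{same} sign in front of the root. The natural candidate is obtained from the identity
\begin{equation*}
  \widetilde{D}[N](t,0,\lz)\, G_0(t,0,\lz) = G_N(t,0,\lz)\, D[N](t,0,\lz),
\end{equation*}
read as a definition of $G_N$ once one knows the left-hand side, divided by $D[N]$, is a polynomial of degree one in $\lz$. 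First I would check the degree/pole count: both $D[N]$ and $\widetilde{D}[N]$ are degree-$N$ matrix polynomials, $G_0$ is degree one, so the product on the left has degree $N+1$; the claim is that the $N$ poles of $D[N]^{-1}$ at $\lz_1,\dots,\lz_N$ are cancelled. This is exactly what the pairing hypothesis \eqref{eq:gpsi} buys: at $\lz=\lz_j$ the kernel vector $\js_j[j-1]$ of the relevant factor of $D[N]$ is mapped by $G_0$ into (a multiple of) the kernel vector $\widetilde{\js}_j[j-1]$ of the corresponding factor of $\widetilde D[N]$, so $\widetilde{D}[N]G_0$ annihilates the same vectors that $D[N]$ does and the quotient is holomorphic there. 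One must also track the point $\lz_0=-\frac{\alpha+i\beta}{2}$: by Remark~\ref{r:defdar}(ii), $G_0$ has a one-dimensional kernel $\upsilon_0$ at $\lz_0$ spanned by $\js_0|_{x=0}$, and this is why $\js_0$ is included in the data — it guarantees $G_N$ inherits a kernel at $\lz_0$, hence is again of defect form with $\beta\neq0$.

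Next I would verify that $G_N$ solves the boundary constraint \eqref{eq:locdef} with $U,V$ replaced by $U[N],V[N]$ and $\widetilde U,\widetilde V$ by $\widetilde U[N],\widetilde V[N]$. This is a formal computation: differentiating the defining identity $\widetilde D[N]G_0 = G_N D[N]$ in $x$ and in $t$, and substituting the Darboux relations \eqref{eq:D1} (in their $N$-fold form, $U[N]=(D[N]_x+D[N]U[0])D[N]^{-1}$, and similarly for $V$ and the tilded objects) together with the fact that $G_0$ satisfies \eqref{eq:locdef} for the seed potentials, one finds after cancellation that $(G_N)_x = \widetilde U[N]G_N - G_N U[N]$ and likewise for $t$. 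Since $G_N$ is, by the previous paragraph, a degree-one polynomial in $\lz$ with a kernel at $\lz_0$, Proposition~\ref{p:def} (applied to the potentials $u[N],\tilde u[N]$) forces it into the form \eqref{eq:genG}; comparing the $\lz^1$ coefficient gives $2\lz\I$ and comparing the $\lz^0$ coefficient identifies the off-diagonal entries as $-i(\tilde u[N]-u[N])$ and its conjugate, and the diagonal as $\alpha \pm i\sqrt{\beta^2-|\tilde u[N]-u[N]|^2}$ with the parameters read off from $\lz_0$ — so $\alpha,\beta$ are automatically preserved. Equivalence of \eqref{eq:genG} for $G_N$ with the defect conditions \eqref{eq:def} for $u[N],\tilde u[N]$ is the already-noted correspondence, so this yields the conclusion.

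The one genuinely delicate point — and the place I expect the real work to be — is the \textbf{sign} in front of the root in the $(11)$-entry of $G_N$: the algebra above pins down $G_N$ only up to the $\pm$ ambiguity, and one must show the sign agrees with that of $G_0$ (this is what ``similar form'' means in the text). Here I would invoke the asymptotic analysis: by Lemma~\ref{l:diag}, each one-fold dressing factor in $D[N]|_{x=0}$ and $\widetilde D[N]|_{x=0}$ tends as $t\to\infty$ to a definite diagonal matrix, $\diag(\lz-\lz_j^\ast,\lz-\lz_j)$ or $\diag(\lz-\lz_j,\lz-\lz_j^\ast)$, the choice being dictated by whether the paired solution $\widetilde{\js}_j$ behaves like the $m$- or the $n$-solution of Proposition~\ref{p:bij}; the hypothesis $u[0],\tilde u[0]\in X$ at $x=0$, together with the fact (Proposition~\ref{p:bij}) that the $t$-Bäcklund transformation preserves $X$, ensures the dressed potentials $u[N],\tilde u[N]$ are still in $X$ so that these limits exist and the argument can be iterated $j=1,\dots,N$. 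Taking $t\to\infty$ in $\widetilde D[N]G_0 = G_N D[N]$ then reduces both sides to explicit diagonal (for the $D$'s) times the limiting $G_0$, and since as $t\to\infty$ one has $|\tilde u[0]-u[0]|\to 0$ so $\sqrt{\beta^2-|\tilde u[0]-u[0]|^2}\to|\beta|$, the limiting $(11)$-entry of $G_0$ is $\alpha+i\,\mathrm{sgn}(\pm)\,|\beta|$; matching this against the limit of $G_N$ fixes the sign of $G_N$ to equal that of $G_0$. I would organize the write-up so that the pairing condition \eqref{eq:gpsi} is used twice — once for holomorphy/degree at the $\lz_j$, once (via Lemma~\ref{l:diag}) for the sign — and flag the verification of \eqref{eq:locdef} for $G_N$ as the routine but lengthy middle step.
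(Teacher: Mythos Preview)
Your plan is correct and matches the paper's strategy: build $G_N$ so that $\widetilde D[N]G_0=G_N D[N]$ at $x=0$, verify the boundary constraint for $G_N$ by differentiating this identity and inserting the Darboux relations together with \eqref{eq:locdef} for $G_0$, extract the defect form of $G_N$, and fix the $\pm$ sign by letting $t\to\infty$ using Proposition~\ref{p:bij} and Lemma~\ref{l:diag}. The sign argument you sketch is exactly the paper's: since $\psi_j$ and $\widetilde\psi_j$ are linked through $G_0$, which becomes diagonal as $t\to\infty$, the diagonal limits of $D[N]$ and $\widetilde D[N]$ agree and $\lim_{t\to\infty}G_N=\lim_{t\to\infty}G_0$.

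Two points where the paper differs from your write-up. First, your assertion that the kernel $\upsilon_0$ of $G_0$ at $\lz_0$ is spanned by $\psi_0|_{x=0}$ is not part of the hypotheses; $\psi_0$ is simply \emph{some} Lax solution at $\lz_0$. The paper explicitly splits into the case $G_0\psi_0\neq0$ (where it defines $G_{N_1}=\widetilde D[N]G_0D[N]^{-1}$ as you do) and the case $\upsilon_0=\psi_0$ (where it builds $G_{N_2}$ directly as a one-fold dressing matrix with kernel $D[N](\lz_0)\psi_0$). In your quotient approach this split is unnecessary, but then the existence of a kernel of $G_N$ at $\lz_0$ should come from $\det G_N=\det\widetilde D[N]\,\det G_0\,\det D[N]^{-1}=\det G_0$, not from $\psi_0$. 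Second, to establish that $G_N$ is a degree-one polynomial, the paper does not argue pole cancellation as you do; instead it writes $L(\lz)=\widetilde D[N]G_0$ and $R(\lz)=G_N D[N]$ as degree-$(N{+}1)$ polynomials with leading coefficient $\I$ and shows $L=R$ by checking that both act identically on the $2N+2$ vectors $\psi_j,\varphi_j=\sigma_2\psi_j^\ast$ at $\lz_j,\lz_j^\ast$, $j=0,\dots,N$ (this is where $\psi_0$ is actually used). Your cancellation argument is fine, but you must also cancel the poles of $D[N]^{-1}$ at the conjugate points $\lz_j^\ast$ via the symmetry $G_0(t,0,\lz_j^\ast)\varphi_j|_{x=0}=\widetilde\varphi_j|_{x=0}$.
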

To this end, we shall show that the functions \(u[N]\) and \(\tilde{u}[N]\)
(a) satisfy the NLS equation on the respective half-line,
(b) are in their Lax systems subject to the boundary constraint with
an, for the time being, unspecified matrix \(G_N\),
and further, that (c) \(G_N\) is of a similar localized defect form as \(G_0\).
\begin{proof}
(a) The \(N\)-fold Darboux transformations \(D[N]\), \(\widetilde{D}[N]\)
construct, as presented in Section \ref{sc:darb}, solutions \(u[N]\),
\(\tilde{u}[N]\) from seed solutions \(u[0]\), \(\tilde{u}[0]\),
which satisfy the same partial differential equations. Therefore,
having \(N\) linearly independent solutions is enough to ensure that
the transformed solutions satisfy the respective NLS equation.
In that regard, as already mentioned the linear independence of \(\js_j\)
and \(\widetilde{\js}_j\) is implied by choosing distinct \(\lz_j\),
\(j=1,\dots,N\).

For (b) and (c) the existence of \(\js_0\) and a kernel vector, we call
\(\upsilon_0\), corresponding to the defect matrix \(G_0\) is crucial.
Given the generality of the statement, we have either linear
independence of \(\upsilon_0\) in terms of \(\{\js_0,\jss_0\}\) or linear dependence,
where \(\jss_0=\sigma_2 \js_0^\ast\), \(\sigma_2=
\begin{pmatrix}
  0 & -i \\
  i & 0
\end{pmatrix}\) is an orthogonal vector of \(\js_0\). Moreover,
we have that for a matrix \(G_N\), of the form \(G_N = \lz \I + G^{(0)}\),
the equality
\begin{align*}
  \widetilde{D}[N] G_0 &= G_N D[N],
\intertext{where \(G_0\) is divided by \(2\) (to be similar to
a dressing matrix), is sufficient for \(G_N\) to satisfy}
  (G_N)_t&=\widetilde{V}[N]G_N-G_N V[N]
\end{align*}
at \(x=0\), except for the zeros of the \(t\) part of the Lax systems \(\lz=\lz_j\), \(j=1,\dots,N\).
To show that, we multiply the first equation, evaluated \(x=0\), with
\((D[N])^{-1}\) and differentiate the resulting equation to obtain
\begin{align*}
  (G_N)_t&=(\widetilde{D}[N] G_0 (D[N])^{-1})_t\\
  &=\widetilde{D}_t[N] G_0 (D[N])^{-1}+\widetilde{D}[N] (G_0)_t (D[N])^{-1}+
  \widetilde{D}[N] G_0 ((D[N])^{-1})_t.
\end{align*}
Whereas, the first two summands can be simplified using \eqref{eq:D1} and
\eqref{eq:locdef} such that
\begin{align*}
  \widetilde{D}_t[N] G_0+\widetilde{D}[N] (G_0)_t & =
  (\widetilde{V}[N]\widetilde{D}[N]-\widetilde{D}[N]\widetilde{V}[0])G_0+
  \widetilde{D}[N](\widetilde{V}[0]G_0-G_0V[0]) \\
  & =\widetilde{V}[N]\widetilde{D}[N] G_0 - \widetilde{D}[N]G_0V[0].
\intertext{In addition, it can be shown with \eqref{eq:D1} that for the third
summand}
  ((D[N])^{-1})_t & = -(D[N])^{-1} D_t[N] (D[N])^{-1} \\
  & =-(D[N])^{-1}   V[N] + V[0](D[N])^{-1}
\intertext{holds. Put together and notice that the expressions
\(\widetilde{D}[N] G_0 (D[N])^{-1}\) are in fact again \(G_N\), we obtain}
  (G_N)_t&= \widetilde{V}[N]\widetilde{D}[N] G_0(D[N])^{-1}
  -\widetilde{D}[N] G_0 (D[N])^{-1} V[N]\\
  &=\widetilde{V}[N]G_N-G_N V[N].
\end{align*}
Analogously, the \(x\) part of the defect constraint is implied.
Thereby, we know that if \(G_N\) is of similar form as \(G_0\),
see \eqref{eq:G0}, with \(\tilde{u}[0]-u[0]\) replaced by \(\tilde{u}[N]-u[N]\),
then \(u[N]\) and \(\tilde{u}[N]\) satisfy the defect conditions. However,
let us first prove that we can construct a matrix \(G_N\) of first order in
\(\lz\) such that \(\widetilde{D}[N] G_0 = G_N D[N]\) at \(x=0\), as above.
\begin{figure}
\centering
\begin{tikzpicture}
  \matrix (m) [matrix of math nodes,row sep=3em,column sep=4em,minimum width=2em]
  {
     \js_0 \text{ solves }
     \begin{cases}
       \js_x=U[0]\js\\
       \js_t=V[0]\js
     \end{cases} &
     \js'_0=D[N]\js_0 \text{ solves }
     \begin{cases}
       \js_x=U[N]\js\\
       \js_t=V[N]\js
     \end{cases}\\
     \widetilde{\js}_0=G_0\js_0 \text{ solves }
     \begin{cases}
       \js_x=\widetilde{U}[0]\js\\
       \js_t=\widetilde{V}[0]\js
     \end{cases} &
     \widetilde{\js}'_0=\widetilde{D}[N]\widetilde{\js}_0
     \text{ solves }
     \begin{cases}
       \js_x=\widetilde{U}[N]\js\\
       \js_t=\widetilde{V}[N]\js
     \end{cases}\\};
  \path[-stealth]
    (m-1-1) edge node [left] {$G_0$} (m-2-1)
            edge node [below] {$D[N]$} (m-1-2)
    (m-2-1) edge node [below] {$\widetilde{D}[N]$} (m-2-2)
    (m-1-2) edge[dashed] node [right] {$G_N$} (m-2-2);
\end{tikzpicture}
  \caption{Properties of \(\js_0\) at \(\lz=\lz_0\) and \(x=0\)
  if \(G_0\js_0\neq0\).}\label{f:psi0}
\end{figure}

(b) As mentioned already, the defect matrix \(G_0\) admits, see Remark
\ref{r:defdar}, a kernel vector \(\upsilon_0\) at \(\lz_0\),
since \(\Im(\lz_0)\neq0\). Hence,
\begin{equation*}
  G_0(t,0,\lz_0)\upsilon_0=0.
\end{equation*}
If this vector \(\upsilon_0\) differs from a linear combination of \(\{\js_0,\jss_0\}\),
we have that the diagram of Figure~\ref{f:psi0} holds. Thus,
as \(\js_0\) is linearly independent of \(\js_1,\dots,\js_N\),
\(D[N]\) is invertible at \(\lz=\lz_0\). Therefore, we can
transform \(\js'_0\), the solution to the Lax system
corresponding to \(u[N]\) at \(\lz=\lz_0\),
to \(\bjs'_0 = \widetilde{D}[N]G_0(D[N])^{-1}\js'_0\).
In turn, this vector \(\bjs'_0\) is a solution to the Lax system
corresponding to \(\tilde{u}[N]\) at \(\lz=\lz_0\). Consequently,
the matrix, we call \(G_{N_1}\), given by the product
\(\widetilde{D}[N]G_0 (D[N])^{-1}\), satisfies the equations
\begin{align*}
  ((G_{N_1})_x&-\widetilde{U}[N]G_{N_1}+G_{N_1} U[N])\js'_0 =0,\\
  ((G_{N_1})_t&-\widetilde{V}[N]G_{N_1}+G_{N_1} V[N])\js'_0 =0
\end{align*}
at \(\lz=\lz_0\) and \(x=0\). Then, we have with the equivalence above at \(x=0\)
the following
\begin{align}\label{eq:GNpro1}
\begin{aligned}
  &\widetilde{D}[N]G_0\psi_0 = G_{N_1} D[N]\psi_0\neq 0,&&\lz=\lz_0,\\
  &\widetilde{D}[N]G_0\varphi_0 = G_{N_1} D[N]\varphi_0\neq 0,&&\lz=\lz_0^\ast.
\end{aligned}
\end{align}
It is reasonable to assume that \(G_{N_1}\) is a polynomial matrix of order \(1\),
due to the product \(\widetilde{D}[N]G_0 (D[N])^{-1}\). Indeed, the dressing
matrices can be written as \(\lz^N(\I +\mathcal{O}(\frac{1}{\lz}))\)
and therefore,
\begin{equation*}
  \widetilde{D}[N]G_0 (D[N])^{-1}=(\I +\mathcal{O}(\frac{1}{\lz}))G_0
  (\I +\mathcal{O}(\frac{1}{\lz}))=\lz \I + \widetilde{G}^{(0)} +
  \mathcal{O}(\frac{1}{\lz}).
\end{equation*}
Whereas the term \(\mathcal{O}(\frac{1}{\lz})\) on the right hand side needs
to be identically zero, since we have a product of polynomials.
Further, evaluating the determinant of \(G_{N_1}\) at \(\lz=\lz_0\) and
\(\lz = \lz_0^\ast\), we obtain \(\det(G_{N_1})=0\). This is implying
that there is a kernel vector \(\tilde{\upsilon}'_0\) such that
\(G_{N_1}(t,0,\lz_0)\tilde{\upsilon}'_0=0\). Constructing a one-fold dressing matrix with
\(\tilde{\upsilon}'_0\) at \(\lz = \lz_0\), we obtain
\begin{equation}\label{eq:GNdar1}
  G_{N_1}=(\lz-\lz_0^\ast)\I+(\lz_0^\ast-\lz_0)\widetilde{P}'_0,\quad
  \widetilde{P}'_0=\frac{\tilde{\upsilon}'_0(\tilde{\upsilon}'_0)^\dagger}{
  (\tilde{\upsilon}'_0)^\dagger \tilde{\upsilon}'_0},
\end{equation}
where \(G_{N_1}\) satisfies the property \eqref{eq:GNpro1}.

On the other hand, if w.l.o.g.\@ \(\upsilon_0=\psi_0\), we construct \(G_{N_2}\) with a
different vector. Remember that \(\js_0\) is linearly independent of
\(\js_1,\dots,\js_N\). Define a new vector
\begin{equation*}
  \upsilon'_0 = D[N](t,0,\lz_0)\upsilon_0.
\end{equation*}
In this case, the strategy is to construct a one-fold
dressing matrix with the defined vector \(\upsilon'_0\) at \(\lz = \lz_0\) by
\begin{equation}\label{eq:GNdar2}
  G_{N_2}=(\lz-\lz_0^\ast)\I+(\lz_0^\ast-\lz_0)P'_0,\quad
  P'_0=\frac{\upsilon'_0(\upsilon'_0)^\dagger}{(\upsilon'_0)^\dagger \upsilon'_0},
\end{equation}
such that \(G_{N_2}(t,0,\lz_0)\upsilon'_0=0\). This results in the property
\begin{align}\label{eq:GNpro2}
\begin{aligned}
  &\widetilde{D}[N]G_0\psi_0 = G_{N_2} D[N]\psi_0 = 0,&&\lz=\lz_0,\\
  &\widetilde{D}[N]G_0\varphi_0 = G_{N_2} D[N]\varphi_0= 0,&&\lz=\lz_0^\ast
\end{aligned}
\end{align}
at \(x=0\).

Constructing \(G_N\) as in one of the two cases \(G_{N_1}\) or \(G_{N_2}\) will give
us commutating matrices at the point \(x=0\) of the defect conditions.
\begin{figure}
\centering
\begin{tikzpicture}
  \matrix (m) [matrix of math nodes,row sep=3em,column sep=4em,minimum width=2em]
  {
     u[0] & u[N] \\
     \tilde{u}[0] & \tilde{u}[N] \\};
  \path[-stealth]
    (m-1-1) edge node [left] {$G_0$} (m-2-1)
            edge node [below] {$D[N]$} (m-1-2)
    (m-2-1) edge node [below] {$\widetilde{D}[N]$} (m-2-2)
    (m-1-2) edge node [right] {$G_N$} (m-2-2);
  \begin{scope}[xshift=15em]
  \matrix (m) [matrix of math nodes,row sep=3em,column sep=4em,minimum width=2em]
  {
     u[0] & u[N] \\
     \tilde{u}[0] & \tilde{u}[N] \\};
  \path[-stealth]
    (m-2-1) edge node [left] {$G_0^{-1}$} (m-1-1)
    (m-1-1) edge node [below] {$D[N]$} (m-1-2)
    (m-2-1) edge node [below] {$\widetilde{D}[N]$} (m-2-2)
    (m-2-2) edge node [right] {$G_N^{-1}$} (m-1-2);
  \end{scope}
\end{tikzpicture}
  \caption{Permutability of defect matrices at \(x=0\).}\label{f:com}
\end{figure}
In particular, we can now show that
\begin{equation}\label{eq:lr}
  (\widetilde{D}[N] G_0)\big\rvert_{x=0} = (G_N D[N])\big\rvert_{x=0}.
\end{equation}
To prove \eqref{eq:lr}, we write each side as a matrix polynomial,
by dividing \(G_0\) by \(2\) and denoting the left and right
hand side respectively as \(L(\lz)\) and \(R(\lz)\), we obtain
in both cases, \(N=N_1\) or \(N=N_2\), the following
\begin{align*}
L(\lz)&= \widetilde{D}[N] G_0 = \lz^{N+1} L_{N+1}+ \lz^N L_N + \dots +\lz L_1 + L_0,\\
R(\lz)&= G_N D[N] =  \lz^{N+1} R_{N+1}+ \lz^N R_N + \dots +\lz R_1 + R_0.
\end{align*}
Since \(L_{N+1}=\I=R_{N+1}\), only \(L_N\), \(R_N\), \dots, \(L_1\), \(R_1\),
\(L_0\) and \(R_0\) need to be determined. In that regard, we consider the zeros
and associated kernel vectors of \(L(\lz)\) and \(R(\lz)\).
By construction of the dressing matrices \(D[N]\), \(\widetilde{D}[N]\),
we have that \(D[N](t,x,\lz_j)\js_j=0\) and \(\widetilde{D}[N]
(t,x,\lz_j)\bjs_j=0\), \(j=1,\dots,N\),
which we will combine with the assumed relation between \(\js_j\)
and \(\bjs_j\). Thus, for the \(N\) linearly independent
\(\js_1,\dots,\js_N\), we have
\begin{align*}
  &L(\lz)\big\rvert_{\lz=\lz_j}\js_j=0,
  && R(\lz)\big\rvert_{\lz=\lz_j}\js_j=0,
\intertext{\(j=1,\dots,N\), whereby these equalities hold for \(x=0\).
Here, the symmetry of the Lax pair provides another vector \(\jss_j
= \sigma_2 \js_j^\ast\), which is orthogonal to \(\js_j\). Analogously, let
\(\widetilde{\jss}_j=\sigma_2 \bjs_j^\ast\) and it
follows that}
  &L(\lz)\big\rvert_{\lz=\lz_j^\ast} \jss_j=0,
  &&R(\lz)\big\rvert_{\lz=\lz_j^\ast} \jss_j=0
\end{align*}
for \(j=1,\dots,N\) and \(x=0\). For a defect matrix of order one, this is
not enough to ensure equality in \eqref{eq:lr}. However,
we constructed \(G_N\) in a way such that there is an additional vector pair
for which the two sides are equal. For \(N=N_2\), it should be noted that even if
the kernel vector is a linear combination of \(\js_0\) and \(\jss_0\), it is
possible to repeat the following steps, but the notation becomes unhandy without
giving more insight. Hence for \(N=N_1\) and \(N=N_2\), we consider
\begin{align*}
  &L(\lz)\big\rvert_{\lz=\lz_0}\js_0=
  R(\lz)\big\rvert_{\lz=\lz_0}\js_0,
\intertext{whereby this equality is either nonzero for \(N=N_1\) or
zero for \(N=N_2\) and holds for \(x=0\). As before, the symmetry of the
Lax pair provides another vector \(\jss_0 = \sigma_2 \js_0^\ast\),
which is orthogonal to \(\js_0\) and for \(x=0\), it satisfies}
  &L(\lz)\big\rvert_{\lz=\lz_0^\ast} \jss_0=
  R(\lz)\big\rvert_{\lz=\lz_0^\ast} \jss_0.
\end{align*}
This \emph{additional} pair of vectors determines \(L(\lz)-R(\lz)=
C(\lz)=\lz^N C_N +\dots +\lz C_1 + C_0\).
Together with the zeros and associated kernel vectors of the Darboux matrices
\(D[N]\), \(\widetilde{D}[N]\),  it can be written as a set of algebraic equations
\begin{equation*}
\begin{array}{rcr}%
  \Bigl(\lz_0^N C_N +\dots + \lz_0 C_1+C_0\Bigr) \js_0=0,& &
  \Bigl((\lz_0^\ast)^N C_N +\dots + \lz_0^\ast C_1+C_0\Bigr) \jss_0=0,\\
  \vdots\;\quad&&\vdots\;\quad\\
  \Bigl(\lz_N^N C_N +\dots + \lz_N C_1+C_0\Bigr) \js_N=0,& &
  \Bigl((\lz_N^\ast)^N C_N +\dots + \lz_N^\ast C_1+C_0\Bigr) \jss_N=0.
\end{array}
\end{equation*}
In matrix form, we have
\begin{equation*}
  (C_N,\cdots,C_0)
  \begin{pmatrix}
    \lz_0^N \js_0 & (\lz_0^\ast)^N \jss_0 & \cdots & \lz_N^N \js_N
    & (\lz_N^\ast)^N \jss_N \\
    \vdots & \vdots & \vdots & \vdots & \vdots \\
    \js_0 & \jss_0 & \cdots & \js_N & \jss_N
  \end{pmatrix}=0.
\end{equation*}
The \((2N+2)\times(2N+2)\) matrix  filled with \(\{\js_0,\jss_0,\dots,
\js_N,\jss_N\}\) is invertible. If the determinant was zero,
we could find coefficients in \(\C\) such that a linear combination
of \(\{\psi_0,\varphi_0,\dots,\psi_N,\varphi_N\}\) would be zero,
which is a contradiction to their linear independence. If, for \(N=N_2\),
\(\upsilon_0\) was a linear combination of \(\js_0\) and \(\jss_0\),
the matrix would still be invertible with this linear combination
and its orthogonal in the first and second column, respectively. Thereby,
\(L(\lz)=R(\lz)\) holds in both cases \(N=N_1\) as well as \(N=N_2\),
which, in turn, implies that (b) is satisfied.

By (b), we have matrices \(G_{N_1}\) and \(G_{N_2}\) which satisfy the boundary
constraint and are of the form of a dressing matrix at \(x=0\). Further,
the equality \eqref{eq:lr} ensures that \(\tilde{\upsilon}'_0\) is also
for \(G_{N_1}\) equal to \(D[N](t,0,\lz_0)\upsilon_0\).
It is an important fact that in both cases the kernel vector \(\upsilon_0\)
of \(G_0(t,0,\lz_0)\) takes the role of the kernel vector \(\tilde{\upsilon}'_0\)
and \(\upsilon'_0\) of respectively \(G_{N_1}\) and \(G_{N_2}\). Only then,
we can prove that the explicit forms of \(G_{N_1}\) and \(G_{N_2}\)
are consistent with \(G_0\) through Proposition \ref{p:bij} and Lemma \ref{l:diag}.
In general, we can not think of them being the same, since we assumed for
\(N=N_2\) that w.l.o.g.\@ \(\upsilon_0=\psi_0\).
Nevertheless, the equality \eqref{eq:lr} already provides the localized defect form
of \(G_{N_1}\) and \(G_{N_2}\) which was commented on in Remark \ref{r:defdar}
except for the sign in front of the \((11)\)-entry.

(c) With the given information we are able to find the localized defect form
of \(G_N = \lz\I + \widetilde{G}^{(0)}\), where the proof is similar in both cases
\(G_{N_1}\) and \(G_{N_2}\). That is, from the off-diagonal of \(L_N=R_N\),
it can be seen that \(\widetilde{G}^{(0)}_{12}\) and \(\widetilde{G}^{(0)}_{21}\)
can respectively be written as \(-i(\tilde{u}[N]-u[N])/2\) and
\(-i(\tilde{u}[N]-u[N])^\ast/2\) at \(x=0\), which gives
\begin{equation}\label{eq:GNv1}
  G_N(t,0,\lz)=\lz \I +\frac{1}{2}
  \begin{pmatrix}
    2 \widetilde{G}^{(0)}_{11}&-i(\tilde{u}[N]-u[N])\\
    -i(\tilde{u}[N]-u[N])^\ast &2 \widetilde{G}^{(0)}_{22}
  \end{pmatrix}.
\end{equation}
We can compute the determinant of \(G_N\), where we use the property of
determinants of Darboux transformations, so that at \(x=0\),
\begin{align*}
  \det(G_N) &= \det(\widetilde{D}[N])\det(G_0)\det((D[N])^{-1})=\det(G_0)\\
  &=(\lz-\lz_0)(\lz-\lz_0^\ast)
  = \lz^2 +\alpha \lz + \frac{\alpha^2+\beta^2}{4}
\end{align*}
and in particular, the determinant is independent of \(t\) and \(x\).
Comparing with \eqref{eq:GNv1}, we obtain
\begin{align*}
  \alpha & = \widetilde{G}^{(0)}_{11}+\widetilde{G}^{(0)}_{22},\\
  \alpha^2+\beta^2 & = 4\widetilde{G}^{(0)}_{11}\cdot
  \widetilde{G}^{(0)}_{22}+|\tilde{u}[N]-u[N]|^2.
\end{align*}
Solving for \(\widetilde{G}^{(0)}_{11}\) and \(\widetilde{G}^{(0)}_{22}\) at \(x=0\),
we have
\begin{align*}
  2\widetilde{G}^{(0)}_{11} & = \alpha \pm i\sqrt{\beta^2-|\tilde{u}[N]-u[N]|^2},\\
  2\widetilde{G}^{(0)}_{22} & = \alpha \mp i\sqrt{\beta^2-|\tilde{u}[N]-u[N]|^2}.
\end{align*}
However, at this point the signs of \(\widetilde{G}^{(0)}_{11}\) and
\(\widetilde{G}^{(0)}_{22}\) are not necessarily the same as the signs
of the defect matrix \(G_0\). In that regard, we know that from solutions
\(u[0]\), \(\tilde{u}[0]\) to the defect conditions with a selected sign,
we can construct solutions \(u[N]\), \(\tilde{u}[N]\) which satisfy the defect
conditions with either the plus or the minus sign. A particular case can be determined
for which we are able to prove that the signs stay the same, ultimately restricting
the solution space.

Assuming that we have
\begin{equation}\label{eq:utinf}
  u[0](\cdot,0), \tilde{u}[0](\cdot,0)\in X,
\end{equation}
then \(u[N](\cdot,0), \tilde{u}[N](\cdot,0) \in X\) by Proposition \ref{p:bij},
since \(D[N]\), \(\widetilde{D}[N]\) are \(N\) transformations of the form
\(\mathcal{B}^t_{\lz_j}(\js_j)\) for \(j=1,\dots,N\).
In that class of solutions, we can identify the signs for matrices \(G_0\) and
\(G_N\) of localized defect form through the kernel vectors respective to their
form as Darboux transformation. We know that in both cases \(\upsilon_0\) is the
kernel vector for \(G_0\) at \(\lz=\lz_0\) and by construction, we have that
\(\omega_0=D[N](t,0,\lz_0)\upsilon_0\) is the kernel vector of \(G_N\) at
\(\lz=\lz_0\) and \(x=0\). On the other hand, as \(t\) goes to infinity \(G_0\)
becomes a diagonal matrix and as a consequence, the limit behaviors of \(\js_j\),
\(\bjs_j\) are the same for \(j=1,\dots,N\), since they are connected through \(G_0\).
Consequently, the dressing matrices \(\widetilde{D}[N]\) and \(D[N]\) have the same
distribution of \(\lz-\lz_j\) and \(\lz-\lz_j^\ast\) in their
diagonal form as \(t\to\infty\). Thus,
\begin{equation}\label{eq:Ginf}
  \lim_{t\to\infty}G_N = \lim_{t\to\infty} \widetilde{D}[N] G_0 (D[N])^{-1}
  = \lim_{t\to\infty} G_0.
\end{equation}
Also the vectors \(\upsilon_0\) and \(\omega_0\), respectively the kernel vectors of
\(G_0\) and \(G_N\) at \(\lz=\lz_0\), admit the same limit behavior as \(t\to\infty\),
since they are connected by \(D[N]\) which admits a diagonal structure
as \(t\to\infty\). Starting with a plus (minus) sign in the \((11)\)-entry of
the defect matrix \(G_0\), we can then conclude by the limit behavior of \(\upsilon_0\)
and \(\omega_0\) as well as \eqref{eq:Ginf} that the sign in the \((11)\)-entry of the
defect matrix \(G_N\) needs to plus (minus). Therefore, they satisfy similar
defect conditions on the spectral side with \(V[0]\), \(\widetilde{V}[0]\) replaced
by \(V[N]\) and \(\widetilde{V}[N]\), which gives the result in the solution class \(X\).

With \(G_N\) of similar localized defect form and
satisfying \((G_N)_t= \widetilde{V}[N]G_N-G_N V[N]\) at \(x=0\),
we can conclude that the defect conditions are preserved for \(u[N]\)
and \(\tilde{u}[N]\).
\end{proof}

With Proposition \ref{p:Nsol}, we proved that \emph{dressing the boundary}
can be applied to the NLS equation on a simple star-graph with a non trivial
boundary condition. Therefore, extending the method, presented in \cite{Zh}, to more
than one half-line and also considering time-dependent boundary matrices.
Thereby, we have given a way to construct \(N\)-soliton solutions for
particular seed solutions \(u[0]\) and \(\tilde{u}[0]\). It should be
mentioned that, since the seed solutions are a part of the localized defect
matrix \(G_0\), their influence on the construction of
\(\tilde{u}[N]\) is decisive. In fact, it is a priori not clear,
whether there exists a solution \(\bjs_1\) at \(\lz=\lz_1\) to the
undressed Lax system of \eqref{eq:NLShl2} satisfying \eqref{eq:gpsi}.

The important feature of the proof is that the localized defect matrix \(G_0\)
is interchangeable with the Darboux matrices \(D[N]\) and \(\widetilde{D}[N]\)
in the sense of Figure \ref{f:com}. In turn, this is realized with the
transformation of the localized defect matrix \(G_0\) to a
Darboux transformation and vice versa the Darboux matrix \(G_N\)
to a localized defect matrix, which has been mentioned in
Remark \ref{r:defdar}. Apparently, this transformation
process is not necessary until applying a time-dependent boundary matrix.
\begin{remark}
In \cite{CoZa} similar results of a two-soliton solution subject to the defect conditions have been
presented without utilizing the spectral side of the model. With the background of a B{\"a}cklund
transformation, the solution was \emph{assumed} to be an individual soliton on each side of the defect,
it was checked with an algebra program that these functions indeed solve the defect conditions,
however only with \(\alpha=0\).
\end{remark}%
In the following section, we want to give an application of Proposition \ref{p:Nsol}. 
\section{Dressing soliton solutions}\label{sc:vis}
\subsection{N-soliton solutions}
Consider the zero seed solutions \(u[0]=\tilde{u}[0]=0\). Particularly,
\(u[0](\cdot,0), \tilde{u}[0](\cdot,0)\in X\) and \(u[0](t,\cdot)\in H^{1,1}(\R_+)\),
\(\tilde{u}[0](t,\cdot)\in H^{1,1}(\R_-)\) and \(\C\setminus \R
\ni\lambda_0=-\frac{\alpha\pm i\beta}{2}\). Hence,
\begin{equation*}
G_0(t,0,\lz)= 2\lz\I +
\begin{pmatrix}
  \alpha\pm i\beta & 0 \\
  0 & \alpha\mp i\beta
\end{pmatrix}.
\end{equation*}
So for solutions \(\js_j\) to the Lax pair corresponding to \(u[0]\)
at \(\lz=\lz_j\in \C\setminus \bigl(\R\cup\{\lz_0,\lz_0^\ast\}\bigr)\),
\(j=0,\dots,N\), we have
\begin{equation}\label{eq:Nsolj}
  \js_j =
  \begin{pmatrix}
    \mu_j \\
    \nu_j
  \end{pmatrix} =
  e^{(-i\lz_j x - 2 i\lz_j^2 t)\st}
  \begin{pmatrix}
    u_j \\
    v_j
  \end{pmatrix}
\end{equation}
with \((u_j,v_j)\in \C^2\) and since the relation
\begin{equation*}
  \bjs_j\big\rvert_{x=0}=
  G_0(t,0,\lz_j)\js_j\big\rvert_{x=0}
\end{equation*}
should hold for \(j=1,\dots,N\) and solutions defined by
\(\bjs_j= e^{(-i\lz_j x - 2 i\lz_j^2 t)\st} (\tilde{u}_j,\tilde{v}_j)^\intercal\),
\((\tilde{u}_j,\tilde{v}_j)\in \C^2\), of the Lax system corresponding to
\(\tilde{u}[0]\) at \(\lz=\lz_j\), we obtain the following relation for the
spectral parameter \(\tilde{u}_j\), \(\tilde{v}_j\), \(u_j\) and \(v_j\),
\begin{equation*}
  \frac{\tilde{u}_j}{\tilde{v}_j} = \frac{2\lz_j+\alpha\pm i\beta}{
  2\lz_j+\alpha \mp i\beta}\frac{u_j}{v_j},\quad j=1,\dots,N.
\end{equation*}
This is enough to apply Proposition \ref{p:Nsol}.
Note that changing the sign of \(\beta\) is the same as changing the sign in the
defect conditions. We also know that the \(N\)-soliton solution constructed with
Proposition \ref{p:Nsol} satisfy \(u[N](\cdot,0), \tilde{u}[N](\cdot,0)\in X\)
and \(u[N](t,\cdot)\in H^{1,1}(\R_+)\), \(\tilde{u}[N](t,\cdot)\in H^{1,1}(\R_-)\),
due to Proposition \ref{p:bij} and Proposition 4.7 in \cite{DZ}, which can easily
be extended to Darboux transformations where \(\lz_1\) has a real part.
Moreover, similar analysis holds true for \(u[0](t,\cdot)=0\) in \(H^{1,1}(\R)\),
then \(u[N](t,\cdot) \in H^{1,1}(\R)\).
As in the proof of Proposition \ref{p:Nsol}, we can use this fact to make
sure that, after finding the defect form \(B_N(t,x,\lz)\) for \(x\in \R\)
of the localized defect matrix \(G_N(t,0,\lz)\), the sign in front of the
root in the \((11)\)-entry is consistent with the sign of the defect form
\(B_0(t,x,\lz)=G_0(t,0,\lz)\) for \(x\in \R\) of the localized defect matrix
\(G_0(t,0,\lz)\). Ultimately, we can use this extension to show that each
soliton interacts with the defect individually.

Taking the same Darboux transformations, however, applying them to zero seed
solutions \(u[0]\) and \(\tilde{u}[0]\) on the whole line \(x\in\R\), we obtain
two \(N\)-soliton solutions \(u_N(t,x)\) and \(\tilde{u}_N(t,x)\) for the NLS
equation for \(x\in\R\). Suppose their corresponding
solutions to the Lax system are related by the dressing transformation
\begin{equation*}
  \bjs(t,x,\lz)=B_N(t,x,\lz)\js(t,x,\lz).
\end{equation*}
Then the matrix \(B_N(t,x,\lz)\) solves the system \eqref{eq:defect}. As explained before,
assuming this matrix is linear in \(\lz\), it can only be of the form described
in Proposition \ref{p:def}, which means there exist real parameter \(\delta\), \(\gamma
\in \R\) and a \(\pm\) sign to be determined such that
\begin{equation*}
  B_N(t,x,\lz)=2\lz\I +
  \begin{pmatrix}
    \delta\pm i\sqrt{\gamma^2-|\tilde{u}_N - u_N|^2} & -i(\tilde{u}_N - u_N) \\
    -i(\tilde{u}_N - u_N)^\ast & \delta\mp i\sqrt{\gamma^2-|\tilde{u}_N - u_N|^2}
  \end{pmatrix}.
\end{equation*}
At their respective half-line, the full line solutions \(u_N(t,x)\) and
\(\tilde{u}_N(t,x)\) can be reduced to their half-line counterpart \(u[N](t,x)\),
\(\tilde{u}[N](t,x)\). Hence,
\begin{equation*}
  B_N(t,0,\lz)=2\lz\I +
  \begin{pmatrix}
    \delta\pm i\sqrt{\gamma^2-|\tilde{u}[N] - u[N]|^2} & -i(\tilde{u}[N] - u[N]) \\
    -i(\tilde{u}[N] - u[N])^\ast & \delta\mp i\sqrt{\gamma^2-|\tilde{u}[N] - u[N]|^2}
  \end{pmatrix}.
\end{equation*}
However, at \(x=0\), we know that the two solutions \(u[N](t,x)\) and \(\tilde{u}[N](t,x)\)
can be connected with the defect matrix \(G_N(t,0,\lz)\) used in the proof of
Proposition \ref{p:Nsol}, i.e.\@ \(B_N(t,0,\lz)=2G_N(t,0,\lz)\). Therefore,
we can deduce that \(\delta = \alpha\), \(\gamma^2 = \beta^2\). This means
that the matrix \(G_N(t,0,\lz)\), constructed
in the proof in order to show that the boundary condition is preserved, has in
fact a continuation \(B_N(t,x,\lz)\) for \(x\in\R\). Due to \(u_N(t,\cdot),
\tilde{u}_N(t,\cdot) \in H^{1,1}(\R)\), we have that as \(x\) goes to plus or
minus infinity:
\begin{equation*}
  \lim_{|x|\to\infty} B_N(t,x,\lz) = 2\lz\I +
  \begin{pmatrix}
    \alpha\pm i|\beta| & 0 \\
    0 & \alpha\mp i|\beta|
  \end{pmatrix}.
\end{equation*}
As before, we can make out the exact sign through the kernel vectors. For the \(N\)-soliton
solution, we have that the kernel vector for \(G_0\) can easily be connected to a solution
of the Lax system. Therefore, we take \(\js_0\) as in equation \eqref{eq:Nsolj},
where \(\lz_0=-\frac{\alpha\pm i\beta}{2}\), \(u_0\in \C\setminus\{0\}\) arbitrary and
\(v_0=0\). Here, the \(\pm\) sign in \(\lz_0\) is the same as in the localized defect matrix
\(G_0\). Continuing \(G_0\) to a defect matrix \(B_0(t,x,\lz)\) for
both zero seed solutions on the full line, we see that the kernel vector \(\js_0\)
carries the information of the signs as \(|x|\to\infty\). Now, we know that the Darboux
transformed kernel vector \(\bjs_0=D[N](t,0,\lz_0)\js_0\) is the kernel vector
for \(G_N(t,0,\lz)\) and hence for \(B_N(t,0,\lz_0)\). However, since in this case the
kernel vector is at the same time a solution to the \(x\) part of the Lax system,
we obtain at \(\lz=\lz_0\) the following
\begin{equation*}
  (B_N)_x \bjs_0=\widetilde{U} B_N \bjs_0 - B_N U \bjs_0=
  \widetilde{U} B_N \bjs_0- B_N (\bjs_0)_x.
\end{equation*}
Thus, \((B_N\bjs_0)_x=\widetilde{U} B_N \bjs_0\) at \(\lz=\lz_0\) and every \(x\in\R\).
In turn, this implies, given \(B_N(t,0,\lz_0)\bjs_0=0\), that \(B_N(t,x,\lz_0)\bjs_0=0\)
for every \(x\in \R\). Then, notice that \(\bjs_0\) has the same asymptotic behavior as
\(\js_0\), since the dressing matrix goes to a diagonal matrix for \(|x|\to\infty\).
As a consequence, the signs in the entries of \(\lim_{x\to\pm\infty}B_N(t,x,\lz)\) are
completely determined by the signs of the limits from the defect matrix \(B_0(t,x,\lz)\).
Which amounts in the problem presented to
\begin{equation*}
  B_{\infty}(\lz)=\lim_{|x|\to\infty} B_N(t,x,\lz) = 2\lz\I +
  \begin{pmatrix}
    \alpha\pm i\beta & 0 \\
    0 & \alpha\mp i\beta
  \end{pmatrix}.
\end{equation*}
Knowing that, we see that the Jost solutions have relations induced by the B{\"a}cklund
transformation and the same normalization factor \(B_{\infty}^{-1}(\lz)\),
\begin{equation*}
  \bjs_\pm(t,x,\lz)=B_N(t,x,\lz)\js_\pm(t,x,\lz)B_{\infty}^{-1}(\lz).
\end{equation*}
In turn, this relation implies the following relation for the corresponding
scattering matrices:
\begin{equation}\label{eq:Arel}
  \widetilde{A}(\lz) = B_{\infty}(\lz) A(\lz)B_{\infty}^{-1}(\lz),\quad \lz\in \R.
\end{equation}
\begin{cor}\label{c:def}
Let \(u(t,x)\) and \(\tilde{u}(t,x)\) be two pure N-soliton solutions of the NLS equation
on \(\R\) constructed by the corresponding vectors used in Proposition \ref{p:Nsol}
and let their restrictions to respectively the positive and negative half-line be
subject to the defect conditions \eqref{eq:def} at \(x=0\). Then, it follows
for \(\lz_j\in\C_+\) that solitons are transmitted through the defect
independently of one another, i.e.\@ for all \(j=1,\dots,N\) the following holds
\begin{align*}
\tilde{x}_j-x_j&=\frac{1}{2\eta_j}\log\Bigl(\Bigl|
\frac{2\lz_j+\alpha\mp i\beta}{2\lz_j+\alpha\pm i\beta}\Bigr|\Bigr),\\
\tilde{\varphi}_j-\varphi_j&= \arg\Bigl(
\frac{2\lz_j+\alpha\mp i\beta}{2\lz_j+\alpha\pm i\beta}\Bigr).
\end{align*}
\end{cor}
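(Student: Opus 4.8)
The plan is to read off the shift in position and phase of each soliton directly from the transformation law \eqref{eq:Arel} for the scattering data, by comparing the normalization of the discrete scattering data on the two sides of the defect. First I would recall that for the pure $N$-soliton solution on the whole line, each soliton is labelled by a zero $\lz_j\in\C_+$ of $a_{11}$ together with a norming constant $C_j$, and, via the reconstruction of the one-soliton profile \eqref{eq:1sol}, the position $x_j$ and phase $\varphi_j$ are determined by $|C_j|$ and $\arg(C_j)$ through $x_j=\frac{1}{2\eta_j}\log\frac{|C_j|}{2\eta_j}$ and $\varphi_j=\arg(C_j)$ (with $\lz_j=\xi_j+i\eta_j$). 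Since both $u_N$ and $\tilde u_N$ are built from the \emph{same} Darboux data $\lz_j$, their discrete spectra coincide; only the norming constants differ. So the entire statement reduces to computing $\tilde C_j/C_j$.

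The key step is to extract $\tilde C_j/C_j$ from the conjugation $\widetilde A(\lz)=B_\infty(\lz)A(\lz)B_\infty^{-1}(\lz)$ with $B_\infty(\lz)=2\lz\I+\diag(\alpha\pm i\beta,\alpha\mp i\beta)$. Although \eqref{eq:Arel} is stated for $\lz\in\R$, both sides continue analytically to $\C_+$ in the relevant entries ($a_{11}$, hence $\tilde a_{11}$, extends to $\C_+$), so the identity $\tilde a_{11}(\lz)=a_{11}(\lz)$ holds on $\C_+$ because $B_\infty$ is diagonal — conjugating by a diagonal matrix leaves the diagonal entries of $A$ unchanged. This confirms $\tilde\lz_j=\lz_j$ and $\tilde a_{11}'(\lz_j)=a_{11}'(\lz_j)$. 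For the off-diagonal / norming data, I would use the relation $b_j=$ (ratio coming from $\js_-^{(1)}=b_j\js_+^{(2)}$ at $\lz_j$) together with $\bjs_\pm=B_N\js_\pm B_\infty^{-1}$: evaluating the column relations \eqref{eq:bj} for the tilded Jost solutions at $\lz=\lz_j$, the matrices $B_N(t,x,\lz_j)$ cancel (they act on the same column vectors up to scalars), and the surviving factor is exactly the ratio of the corresponding diagonal entries of $B_\infty^{-1}(\lz)$ evaluated at $\lz_j$, namely
\[
\frac{\tilde b_j}{b_j}=\frac{2\lz_j+\alpha\pm i\beta}{2\lz_j+\alpha\mp i\beta}
\quad\text{(up to inversion depending on which column),}
\]
and since $C_j=b_j/a_{11}'(\lz_j)$ with the denominator unchanged, $\tilde C_j/C_j=\bigl(2\lz_j+\alpha\mp i\beta\bigr)/\bigl(2\lz_j+\alpha\pm i\beta\bigr)$. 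This is consistent with the explicit spectral-data computation already carried out for the zero seed in Section~\ref{sc:vis}, where $\tilde u_j/\tilde v_j = \frac{2\lz_j+\alpha\pm i\beta}{2\lz_j+\alpha\mp i\beta}\,u_j/v_j$, which propagates through the (common) Darboux transformation to the same multiplicative factor on the norming constants.

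Finally I would translate the ratio $\tilde C_j/C_j$ into the geometric quantities: from $x_j=\frac{1}{2\eta_j}\log\frac{|C_j|}{2\eta_j}$ we get $\tilde x_j-x_j=\frac{1}{2\eta_j}\log\bigl|\tilde C_j/C_j\bigr|=\frac{1}{2\eta_j}\log\bigl|\frac{2\lz_j+\alpha\mp i\beta}{2\lz_j+\alpha\pm i\beta}\bigr|$, and from $\varphi_j=\arg(C_j)$ we get $\tilde\varphi_j-\varphi_j=\arg\bigl(\tilde C_j/C_j\bigr)=\arg\bigl(\frac{2\lz_j+\alpha\mp i\beta}{2\lz_j+\alpha\pm i\beta}\bigr)$, which are exactly the claimed formulas; the fact that these depend only on $\lz_j$ (and the fixed defect parameters), not on the other solitons, is the statement that the solitons pass through the defect independently. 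The main obstacle I anticipate is the bookkeeping in the previous paragraph: one must be careful about which Jost column carries which diagonal entry of $B_\infty^{-1}$, about the $\pm$/$\mp$ sign conventions inherited from the chosen branch in $G_0$, and about justifying the analytic continuation of \eqref{eq:Arel} off the real axis for the relevant entries — but all of these are routine once the symmetry $a_{22}(\lz)=a_{11}^\ast(\lz^\ast)$ and the cancellation of $B_N(t,x,\lz_j)$ on kernel vectors are invoked. A clean alternative, avoiding Jost-function bookkeeping, is simply to quote the explicit norming-constant relation from Section~\ref{sc:vis} directly and plug into \eqref{eq:1sol}; I would present that as the streamlined route and relegate the scattering-matrix derivation to a remark.
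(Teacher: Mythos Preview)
Your proposal is correct and follows essentially the same route as the paper: derive $\tilde b_j/b_j$ from the Jost-column relations $\bjs_\pm=B_N\js_\pm B_\infty^{-1}$ combined with \eqref{eq:bj}, use the diagonal conjugation \eqref{eq:Arel} to see that $\tilde a_{11}=a_{11}$ (the paper writes $\tilde a_{22}=a_{22}$, which amounts to the same thing), conclude $\tilde C_j/C_j=(2\lz_j+\alpha\mp i\beta)/(2\lz_j+\alpha\pm i\beta)$, and read off the shifts from the parametrization $C_j=2\eta_j e^{2\eta_j x_j+i\varphi_j}$. Your ``streamlined alternative'' via the explicit Darboux scattering-data update is exactly what the paper records in the remark immediately following the corollary (invoking Theorem~1.22 of \cite{GHZ}), so even your fallback coincides with the paper's presentation; the only thing to tighten is the sign bookkeeping you flag yourself---the paper fixes it by writing out the two column relations \eqref{eq:jostbeta} explicitly, which removes the ``up to inversion'' ambiguity.
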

\begin{proof}
By the analysis above, we know in this case that
\begin{equation*}
  B_{\infty}(\lz)=\lim_{|x|\to\infty}B_N(t,x,\lz)=(2\lz+\alpha)\I\pm i\beta\st,
\end{equation*}
where the sign in front of \(\beta\) matches the sign of the defect.
The relation of the Jost solutions gives
\begin{equation}\label{eq:jostbeta}
\begin{aligned}
  \bjs_-^{(1)}&=B_N(t,x,\lz) \js_-^{(1)} (2\lz+\alpha\pm i\beta)^{-1}, \\
  \bjs_+^{(2)}&=B_N(t,x,\lz) \js_+^{(2)} (2\lz+\alpha\mp i\beta)^{-1}.
\end{aligned}
\end{equation}
We can deduce using \eqref{eq:jostbeta} and \eqref{eq:bj} that
\begin{align*}
  \bjs_-^{(1)}(t,x,\lz_j) &= \frac{b_j}{2\lz_j+\alpha\pm i\beta}
  B_N(t,x,\lz_j)\js_+^{(2)}(t,x,\lz_j)
  \intertext{and with \eqref{eq:jostbeta} and the corresponding weight relations
  to \eqref{eq:bj} for the extended solution \(\tilde{u}(t,x)\), we obtain}
  &=\frac{2\lz_j+\alpha\mp i\beta}{2\lz_j+\alpha\pm i\beta}
  \frac{b_j}{\tilde{b}_j} \bjs_-^{(1)}(t,x,\lz_j).
\end{align*}
Therefore, the constants \(\tilde{b}_j\) and \(b_j\) can be related by
\begin{equation}\label{eq:bjrel}
  \frac{\tilde{b}_j}{b_j}=\frac{2\lz_j+\alpha\mp i\beta}{2\lz_j+\alpha\pm i\beta}.
\end{equation}
Moreover, the relation \eqref{eq:Arel} for the scattering matrices implies
\begin{align}\label{eq:a22rel}
  \tilde{a}_{22}(\lz) &= a_{22}(\lz),\\\notag
  \tilde{a}_{12}(\lz) &= \frac{2\lz+\alpha\pm i\beta}{2\lz+\alpha \mp i\beta}
  a_{12}(\lz).
\end{align}
These two relation \eqref{eq:bjrel} and \eqref{eq:a22rel} can be combined
to relate the weights \(\tilde{C}_j\) and \(C_j\) in the following way
\begin{equation*}
  \frac{\tilde{C}_j}{C_j}=\frac{\tilde{b}_j}{b_j}\frac{a'_{22}(\lz_j)}{
  \tilde{a}'_{22}(\lz_j)}=\frac{2\lz_j+\alpha\mp i\beta}{2\lz_j+\alpha\pm i\beta},
\end{equation*}
from where we can see the influence on the \(N\)-soliton solution. Thereby,
writing the norming constants as
\begin{equation*}
  C_j = 2\eta_j e^{2\eta_j x_j+ i\varphi_j},\quad
  \tilde{C}_j = 2\eta_j e^{2\eta_j \tilde{x}_j+ i\tilde{\varphi}_j}
\end{equation*}
for \(j=1,\dots,N\) as motivated for the one-soliton solution in Section \ref{sc:NLS},
we obtain for the spatial shift \(\tilde{x}_j-x_j\) and the phase shift
\(\tilde{\varphi}_j-\varphi_j\) the following
\begin{align*}
\tilde{x}_j-x_j&=\frac{1}{2\eta_j}\log\Bigl(\Bigl|
\frac{2\lz_j+\alpha\mp i\beta}{2\lz_j+\alpha\pm i\beta}\Bigr|\Bigr),\\
\tilde{\varphi}_j-\varphi_j&= \arg\Bigl(
\frac{2\lz_j+\alpha\mp i\beta}{2\lz_j+\alpha\pm i\beta}\Bigr),
\end{align*}
which implies that solitons experience independently of one another.
\end{proof}
\begin{remark}
Another way of proving Corollary \ref{c:def} is to use Theorem 1.22 of \cite{GHZ},
where it is shown that the scattering data is, after successive iteration of the
Darboux transformation, in each step, which we indicate by \([j]\), given by
\begin{equation*}
  \begin{aligned}
  &a_{11}^{[j]}(\lz) =\frac{\lz-\lz_j}{\lz-\lz_j^\ast} a_{11}^{[j-1]}(\lz),
  &&\tilde{a}_{11}^{[j]}(\lz) =\frac{\lz-\lz_j}{\lz-\lz_j^\ast}\tilde{a}_{11}^{[j-1]}(\lz),
  &&\lz \in \C_+\cup \R,\\
  &a_{21}^{[j]}(\lz) =a_{21}^{[j-1]}(\lz),
  &&\tilde{a}_{21}^{[j]}(\lz) =\tilde{a}_{21}^{[j-1]}(\lz),
  &&\lz \in \R,\\
  &C_k^{[j]} = \frac{\lz_k-\lz_j^\ast}{\lz_k-\lz_j}C_k^{[j-1]},
  &&\tilde{C}_k^{[j]} = \frac{\lz_k-\lz_j^\ast}{\lz_k-\lz_j}\tilde{C}_k^{[j-1]},
  &&k = 1,\dots,j-1,\\
  &C_j^{[j]} = \frac{\lz_j-\lz_j^\ast}{-\frac{v_j^\ast}{u_j^\ast}a_{11}^{[j-1]}(\lz_j)},
  &&\tilde{C}_j^{[j]} = \frac{\lz_j-\lz_j^\ast}{-\frac{\tilde{v}_j^\ast}{\tilde{u}_j^\ast}
  \tilde{a}_{11}^{[j-1]}(\lz_j)}.
  \end{aligned}
\end{equation*}
Therefore, given that \(a_{11}^{[0]}(\lambda)=1\) and \(a_{12}^{[0]}(\lambda)=0\)
for the zero seed solution \(u[0](t,x)=0\), we have that
\begin{equation*}
  \frac{\tilde{C}_j^{[N]}}{C_j^{[N]}} = \frac{\tilde{u}_j^\ast}{\tilde{v}_j^\ast}
  \frac{v_j^\ast}{u_j^\ast}=\frac{2\lz_j+\alpha\mp i\beta}{2\lz_j+\alpha\pm i\beta},
  \quad j=1,\dots,N.
\end{equation*}
\end{remark}
The complex conjugation of the quotients \(\frac{\tilde{u}_j^\ast}{\tilde{v}_j^\ast}\)
and \(\frac{v_j^\ast}{u_j^\ast}\) is due to the fact that in the referenced book, the
dressing is done with the Jost functions which effectively go to \(e_1\) and \(e_2\)
as \(x\) respectively goes to \(-\infty\) and \(\infty\). In order for the \(\js_j\)
to comply with that requirement, we need to change their asymptotic behavior while
making sure that the \(\lz_j\) are chosen correctly. In summary, the theorem affirms
Corollary \ref{c:def}.
\begin{figure}
  \centering
  \includegraphics{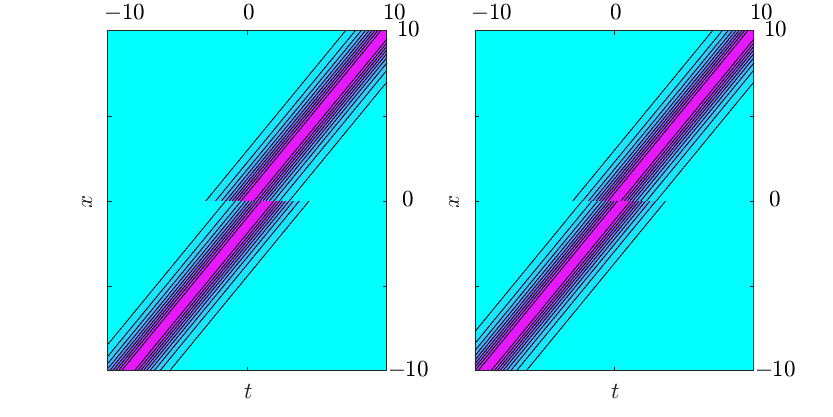}
  \caption{Contour plot of a one-soliton solution satisfying the NLS equation
  on each half-line and the defect conditions \eqref{eq:def} with defect
  parameter \(\alpha=0\) and \(\beta=1\) (left) as well as \(\beta=3\) (right).
  }\label{fig:1sol}
\end{figure}
\begin{figure}
  \centering
  \includegraphics{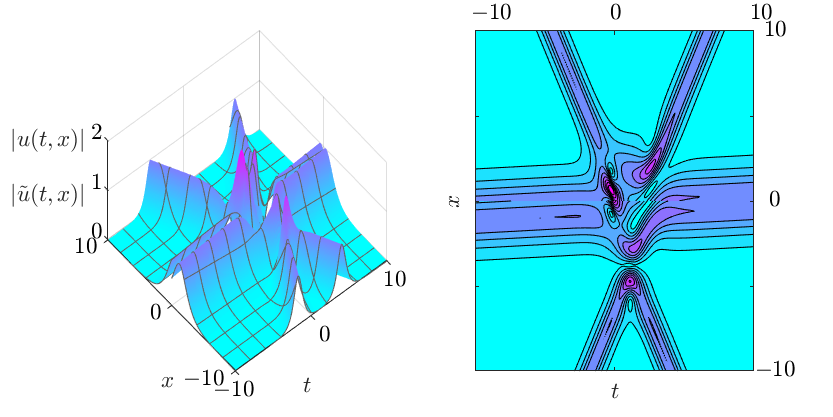}
  \caption{Plot of a three-soliton solution (left) and its contour (right)
  satisfying the NLS equation on each half-line and the defect conditions
  \eqref{eq:def} with defect parameter \(\alpha=0\) and \(\beta=1\).}\label{fig:3sol}
\end{figure}
\begin{figure}
  \centering
  \includegraphics{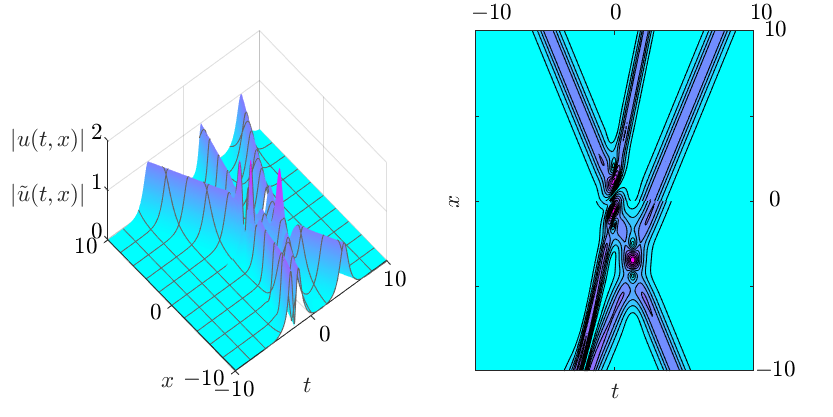}
  \caption{Plot of a three-soliton solution (left) and its contour (right)
  satisfying the NLS equation on each half-line and the defect conditions
  \eqref{eq:def} with defect parameter \(\alpha=0\) and \(\beta=1\).}\label{fig:3sol2}
\end{figure}

The expression \(\frac{2\lz_j+\alpha\mp i\beta}{2\lz_j+\alpha\pm i\beta}\) lets us
state some facts about the behavior of the spatial and phase shift of the \(N\)-soliton
after interacting with the defect. Letting \(\beta\) go to zero, the quotient goes
to \(1\), which indicates the discontinuity at \(x=0\) disappears, suggesting that
\(\alpha\) by itself can not maintain it. Whereas letting \(|\beta|\) go to infinity,
the quotient goes to \(-1\), which means no considerable spatial shift as
\(\tilde{x}_j-x_j\) goes to zero and essentially a shape inversion as
\(\tilde{\varphi}_j-\varphi_j\) goes to \(\pi\) for all \(j=1,\dots,N\).
However, if we take \(\beta\in \R\setminus\{0\}\) and let \(|\alpha|\) go to infinity,
the effect of the discontinuity also disappears, i.e.\@ \(\tilde{x}_j-x_j\) and
\(\tilde{\varphi}_j-\varphi_j\) both go to zero for all \(j=1,\dots,N\). Hence,
the second defect parameter may be understood as a means to smooth out the
discontinuity in the presence of the defect condition \((\beta\neq0)\). Therefore,
the discontinuity reaches its full potential, when \(\alpha=0\).

In this regard, we plotted the absolute value of the one-soliton solutions, \(u[1]\)
and \(\tilde{u}[1]\) satisfying the NLS equation on \(\R_+\) and \(\R_-\) and being
subject to the defect condition, in Figure \ref{fig:1sol} and thereby showing the effect
of an increasing defect parameter \(\beta\). In Figure \ref{fig:3sol}, we plot the
absolute value of a three-soliton solution \(u[3]\)
and \(\tilde{u}[3]\) satisfying the presented model with defect
parameter \(\alpha=0\) and \(\beta=1\) and also its contour. All of the three solitons
have the same amplitude, two of them have opposite velocity and the velocity
of the third soliton is chosen to be slow in order to show the discontinuity. Again,
one can observe a smoothing effect when choosing either \(\alpha\) not equal to zero
or \(\beta\) large enough. Similarly, a three-soliton solution, where the slow moving
soliton is replaced by a fast moving soliton with the same amplitude, is shown in
Figure \ref{fig:3sol2}. Conceptually, higher order soliton solutions could be computed
and plotted.

The authors of \cite{CoZa} have been investigating the construction of soliton solutions
by confirming through direct calculation that the one- and two-soliton solutions
satisfy the defect conditions. For the convenience of the reader, we give the
connection to the notation therein for the one-soliton solution.
\begin{remark}
To translate the expression into the notation used in \cite{CoZa}, first off we
need to take \(\beta=0\) and additionally \(\Omega=\sqrt{\alpha^2 - |\tilde{u}-u|^2}\).
Then, for the one-soliton solution consider \(\frac{v_1}{u_1}=1\), \(a = 2\eta\),
\(c=-2\xi\), \(p=e^{-2\eta \tilde{x}_1}\) and finally \(q=e^{-i \tilde{\varphi}_1}\)
to recover the same result.
\end{remark}
\subsection{Destructive soliton solution}
There is also a particular solution known for which we can avoid restricting the solution
space to \(X\). Beginning again with zero seed solutions \(u[0]=0\), \(\tilde{u}[0]=0\),
\(\alpha\in \R\), \(\beta\in \R\setminus\{0\}\) and a choice in the sign in front of the
root in the \((11)\)-entry of the localized defect matrix \(G_0(t,0,\lz)\), we can construct
\begin{equation*}
  G_0(t,0,\lz) = 2\diag(\lz-\lz_0,\lz-\lz_0^\ast),
\end{equation*}
where \(\lz_0 = -\frac{\alpha\pm i\beta}{2}\). Taking the same spectral parameter \(\lz_0\)
to construct a dressing matrix on one side, we take \(\R_-\), of the defect together
with a solution \(\bjs_0=(\mu,\nu)^\intercal\) to the undressed Lax system \eqref{eq:jost}
at \(\lz=\lz_0\) corresponding to \(\tilde{u}[0]\), we obtain
\begin{equation*}
\widetilde{D}[1]=(\lz-\lz_0^\ast)\I+(\lz_0^\ast-\lz_0) P[1],\quad
P[1] = \frac{\bjs_0 \bjs_0^\dagger}{\bjs_0^\dagger\bjs_0}.
\end{equation*}
For the sign in front of the root in the \((11)\)-entry of the localized defect
matrix \(G_0(t,0,\lz)\) to be plus or minus, the solution \(\bjs_0=(\mu,\nu)^\intercal\)
respectively needs to have the limit value \(e_1\) or \(e_2\)
as \(x \to -\infty\). On the other half-line \(\R_+\),
we assume that the solution stays the same \(u[1]=0\).
Then, constructing \(G_1(t,0,\lz)\) as dressing matrix with \(\lz_0\)
and corresponding vector \(\bjs_0=(\mu,\nu)^\intercal\) such that at \(x=0\)
this vector is the kernel vector of \(G_1\), we have everything we need in order
to prove that the boundary constraint is preserved. Remark~\ref{r:defdar}
suggests, that \(G_1\) can be written in localized defect form connecting
\(u[1]\) and \(\tilde{u}[1]\), in other words
\begin{equation*}
\begin{aligned}
  (G_1)_x &=\widetilde{U}[1] G_1- G_1 U[1]=\widetilde{U}[1] G_1- G_1 (-i\lz\st),\\
  (G_1)_t &=\widetilde{V}[1] G_1- G_1 V[1]=\widetilde{V}[1] G_1- G_1 (-2i\lz^2\st),
\end{aligned}
\end{equation*}
which follows directly from the property of \(\widetilde{D}[1]\). Further,
the assumed limit behavior of the solution \(\bjs_0\) makes sure
that---after extending \(G_1\) to a defect matrix on \(x\in\R_-\)---the
form is similar to the form of \(G_0\) in terms of the sign in front of the root.
\begin{remark}
It is possible to switch the roles of \(u[1]\) and \(\tilde{u}[1]\),
keeping the zero solution on \(\R_-\) and dressing at \(\lz_0\) for \(\R_+\)
with predetermined limit behavior of the corresponding solution \(\js_0\) of
the Lax system corresponding to \(u[0]\). However, in that case \(G_1\) needs
to be constructed as \((D[1])^{-1}\) in order for the localized defect form
to connect \(u[1]\) and \(\tilde{u}[1]\), since \(D[1]\) has different partial
differential equations than \(\widetilde{D}[1]\).
\end{remark}
In Figure \ref{fig:dsol}, we plotted two examples of one-soliton solutions
interacting destructively with the defect condition. As mentioned before,
the amplitude and velocity of the soliton is prescribed by the strength of
the defect parameter \(\alpha\) and \(\beta\). The spatial and phase shift
however can be chosen arbitrarily. The idea of these solutions emerged as a
special case, when working with \(\alpha=0\), in order to find nonlinear
counterparts of bound states which are solutions to the linear, potential-free,
Schr\"odinger equation with a defect in \cite{CoZa}. Here, this idea takes the
form of a boundary-bound soliton solution and a one-soliton solution on one of the
half-lines.
\begin{remark}
Taking \(\alpha=0\), the destructive soliton solution is in fact a boundary-bound
soliton solution, which especially is not covered by Proposition \ref{p:Nsol}.
\end{remark}
\begin{figure}
  \centering
  \includegraphics{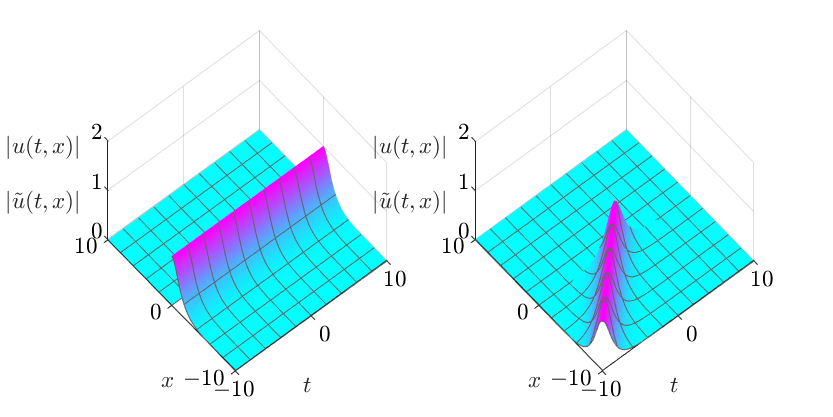}
  \caption{Plot of a boundary-bound and a one-soliton solution satisfying the NLS
  equation on each half-line and interacting destructively with the defect conditions
  \eqref{eq:def} with defect parameter \(\alpha=0\) (left) as well as \(\alpha=0.5\)
  (right) and \(\beta=1\).
  }\label{fig:dsol}
\end{figure} 
\section*{Conclusion}
The defect conditions are subject to some interesting properties as classical systems are,
due to the fact that they stem from a localized B\"acklund transformation. In combination
with Darboux transformations, we provide a direct method in order to compute exact
solutions of the focusing NLS equation on two half-lines connected via the defect
conditions. By carefully reviewing the properties we need, we give a reduction of
the class of solutions which is needed to determine the localized defect matrix
for the Darboux transformation of the solutions. Thereby, we introduce the method
of dressing the boundary to a system consisting of two half-lines connected through
a boundary condition induced by the localized B\"acklund transformation. Hence,
not only readjusting the method to encompass a simple star-graph, but also putting
forward an application on time-dependent gauge transformations, ultimately generalizing
the method presented in \cite{Zh} in two ways.

To discuss soliton behavior, we provided additionally the proof of the conjecture
formulated in \cite{CoZa}: In the model of the NLS equations on two half-lines
connected via defect conditions, an arbitrary number of solitons are transmitted
through the defect independently of one another. Through this means, we
simultaneously made it more relatable to the works where the mirror-image technique
was employed \cite{BiHw}. Thereby, the question arises whether it is possible
to use the mirror-image technique in the model presented to arrive the same results.

The analysis we carried out for a Darboux transformation with respect to \(t\) is
an analogous result to the one with respect to \(x\) given in \cite{DZ}. That being said,
it is also possible to apply the same analysis at an arbitrary point \(x_f\in \R\).
Since the defect conditions can simultaneously be shifted, the results we have are easily
applicable for a defect condition at an arbitrary point \(x_f\in\R\) connecting
two semi-infinite sets in \(\R\).

It is reasonable to assume that the method of dressing the boundary can be applied
to a wide range of systems on which integrable boundary structures exist.
The closest application would be to combine the results in this paper with the results
of \cite{Zh} to obtain soliton solutions for the new boundary conditions \cite{Za} for
the NLS equation on the half-line. Nonetheless, it could also be used to extend
the results \cite{ZCZ} on the sine-Gordon equation with integrable boundary to
include time-dependent transformations on the half-line. Other systems with integrable
boundaries are presented in \cite{Ca2, CoZa} and it can be investigated if the
method extends to these systems. 

\end{document}